\documentclass[aps,pra,showpacs,twoside,twocolumn,10pt]{revtex4-2}
\usepackage[colorlinks=true, citecolor=blue, urlcolor=blue, linkcolor = blue ]{hyperref}

\usepackage{amsthm,graphicx,bm,amsmath,xcolor,braket,plain,color,amsthm,amsfonts,tabularx,graphicx,bbm,mathtools,esvect,wrapfig,verbatim,enumitem,dcolumn,amssymb,appendix,physics,fmtcount,booktabs,csquotes,epsfig,times,geometry,array,mathrsfs,graphics,epstopdf,latexsym,comment,cancel}

\geometry{lmargin=1.75cm,rmargin=1.75cm,tmargin=2cm,bmargin=2cm}

\usepackage[normalem]{ulem}

\usepackage[mathscr]{euscript}
\def\Tr{\operatorname{Tr}}


\newtheorem{theorem}{Theorem}

\newtheorem{remark}{Remark}

\newtheorem{corollary}{Corollary}

\begin{document}

\title{Optimal quantum precision in noise estimation: Is entanglement necessary?}

\author{Shuva Mondal}
\email{mondalshuva01@gmail.com}

\author{Priya Ghosh}
\email{priyaghosh1155@gmail.com}

\author{Ujjwal Sen}
\email{ujjwalsen0601@gmail.com}

\affiliation{Harish-Chandra Research Institute, A CI of Homi Bhabha National Institute, Chhatnag Road, Jhunsi, Prayagraj 211 019, India}

\begin{abstract}
We ask whether the optimal probe is entangled, and if so,  what is its character and amount, for estimating 
the noise parameter of a large class of local quantum encoding processes that we refer to  as vector encoding, examples of which include the local depolarizing and bit-flip channels. 
We first establish that vector encoding is invariably ``continuously commutative'' for optimal probes. 
We utilize this result to deal with the queries about entanglement in the optimal probe. 
We show that for estimating noise extent of the two-party arbitrary-dimensional local depolarizing channel, there is a descending staircase of optimal-probe entanglement for increasing depolarizing strength. 
For the multi-qubit case, the analysis again leads to a staircase, but which can now be monotonic or 
not, depending on the multiparty entanglement measure used. 
We also find that when sufficiently high depolarizing noise is to be estimated, fully product multiparty states are the only choice for being optimal probes. In many cases, for even moderately high depolarizing noise, fully product states are optimal. For two-qubit local bit-flip channels, the continuous commutativity of the channel and optimal probe implies that a product state suffices for obtaining the optimal precision.
\end{abstract}

\maketitle

\section{Introduction}
Quantum parameter estimation provides a tool to estimate a parameter in a quantum setup as precisely as possible within the framework of quantum mechanics. Recently, quantum metrology~\cite{RevModPhys.90.035005,PhysRevLett.96.010401,Giovannetti2004,RevModPhys.89.035002,RevModPhys.90.035006,Pirandola2018,Helstrom1969,Holevo1973,Braunstein_PRL_1994,Holevo2011} has gained much attention, in particular for improving the precision using quantum resources like entanglement~\cite{Cirac96,ent-optimal,Riedel2010,Giovannetti2011,bhattacharyya2024even,group_debarupa_2025}, quantum coherence~\cite{Giovannetti2004,Maccone_PRL_2006}, and squeezing~\cite{noisy-unitary-estimation-1,noisy-unitary-estimation-2,Lawrie2019}. 
Quantum metrology has a wide range of applications in several fields: from detecting gravitational waves~\cite{Aasi2013,Schnabel2010,Danilishin2012,RevModPhys.86.121,PhysRevLett.123.231108, PhysRevLett.123.231107}, to optical resolution~\cite{PhysRevLett.117.190802,PhysRevX.6.031033,PhysRevLett.117.190801}, from quantum clocks~\cite{PhysRevLett.117.143004,RevModPhys.83.331,RevModPhys.87.637,Nichol2022,Nichol2022} to quantum imaging~\cite{Dowling:15,PhysRevLett.102.253601,Lugiato2002,PhysRevLett.109.123601,Genovese2016,Moreau2019}, and even in biology~\cite{TAYLOR20161,Mauranyapin2017}.

Noise is ubiquitous in 
quantum systems, and metrological tasks must account for noise in at least two distinct ways. First, noise can arise during probe preparation, the encoding process, or the measurement of the encoded state. In this case, the goal is to analyze how noise impacts the precision of a metrological task in its presence. Second, one may be interested in estimating the parameters characterizing a noisy quantum channel itself. This is particularly valuable, as knowing how much noise has affected a system allows one to better handle other quantum information tasks subjected to the same type of noise.
The first scenario has been extensively studied in various works~\cite{cirac_plenio_PhysRevLett.79.3865, PhysRevLett.112.120405,Yousefjani2017, peng2023dissipativequantumfisherinformation,PhysRevA.109.052626}, while the second, focusing on estimating the noise parameters of quantum channels, has received comparatively less attention~\cite{PhysRevA.63.042304, Akio_Fujiwara_2003, MasahitoHayashi_2010,demkowicz_2012,rafal_2014, PhysRevLett.118.100502, RAZAVIAN2019825, PRXQuantum.2.010343,rafal_2023}.

Engineering entanglement is  costly: both with respect to its preparation and its manipulation.
This naturally leads to a key question: is entanglement truly necessary to attain optimal precision in quantum metrological tasks? More specifically, does the optimal probe state - one that maximizes the quantum Fisher information - always need to be entangled?
While it is well-established that entangled probes are essential for achieving precision beyond the standard quantum limit in unitary phase estimation, the necessity of entanglement in the context of noise-parameter estimation of quantum channels remains less well-understood.

Our work aims to address this open question by investigating whether, and if yes, how much entanglement is required to attain the best metrological precision in estimation of noise parameter of channels.
We focus on a class of encoding processes that we refer to as vector encoding. We derive a necessary criterion to find optimal probe(s) for parameter estimation of these encoding processes, by using a concept that we refer to as continuous commutativity. Using this criterion, we explore the entanglement properties of the optimal probes for two representative subclasses of vector encoding: arbitrary-dimensional local depolarizing channels and local bit-flip channels.

In the estimation of the noise parameter of arbitrary-dimensional depolarizing channels using two-qudit probes, we obtain that maximally entangled two-qudit probes do not always provide optimal precision. As the depolarizing noise strength decreases, the entanglement of the optimal probe increases  \textit{step-by-step}, one Schmidt rank at a time, culminating in an abrupt transition - several Schmidt ranks at a single go - to a maximally entangled state at a critical noise strength that depends on the local Hilbert space dimension. It is worth noting that for any arbitrary-dimensional local depolarizing channel, there always exists a dimension-dependent region of noise strength in which any pure product probe serves as the optimal probe for estimating the noise strength. 
For multiparty probing, we show that for sufficiently high local depolarization noise, estimating the noise strength is optimal for fully product probes. For a low number of qubits used as probes, we numerically find that fully product states are optimal for estimation of even moderate noise levels. 
For two-qubit local bit-flip channels, we again find that entanglement is not a necessary resource: a~\textit{specific product probe} suffices to achieve the optimal precision in estimating the noise parameter, whatever the value of bit-flip noise strength is. However, unlike the case of local depolarization, even entangled states can do the same job in the whole noise range.

The remaining part of the paper is organized as follows. In Section~\ref{sec2}, we discuss some preliminary concepts of quantum parameter estimation and introduce a large class of parameter encoding, which we refer to as vector encoding, and which includes two paradigmatic quantum channels as examples: depolarizing and bit-flip channels. Our main results are presented in Sections~\ref{sec3},~\ref{sec4}, and~\ref{sec5}. In Section~\ref{sec3}, we provide a necessary criterion to identify the optimal probe for parameter estimation of any vector encoding. We term the criterion as continuous commutativity.  Section~\ref{sec4} focuses on the role of entanglement in achieving optimal precision for noise-parameter estimation of two-qudit and multi-qubit local depolarizing channels.
In Section~\ref{sec5}, we explore the necessity of entanglement for estimating the noise parameter of a two-qubit local bit-flip channel with optimal precision. Finally, in Section~\ref{sec6}, we summarize our findings.

\section{Essential background}
\label{sec2}
In this paper, we study whether entanglement is necessary to achieve the best possible precision in estimating the noise strength of arbitrary-dimensional local depolarizing channels and two-qubit local bit-flip channels.
So we will provide a brief review of single parameter quantum estimation theory, followed by discussions on the arbitrary-dimensional depolarizing channel and the single-qubit bit-flip channel in this section. Throughout the paper, we denote the $d$-dimensional state space as $\mathbb{C}^d$. 
We denote the identity operator on that space as $\mathbb{I}_{d}:\mathbb{C}^d \to \mathbb{C}^d$. 
$\{\sigma^{x},\sigma^{y},\sigma^{z}\}$ denote the three Pauli matrices. 
$\ket{0}$ and $\ket{1}$ represent the eigenvectors of $\sigma^z$ with eigenvalues $+1$ and $-1$, respectively. The $\ket{+}$ denotes eigenvector of $\sigma^x$ with eigenvalue $+1$ and $\ket{-}$ denotes eigenvector of $\sigma^x$ with eigenvalue $-1$. Also, we use $H(X) \coloneqq -X\log_2(X)-(1-X)\log_2(1-X)$ to denote the binary entropy function of $X.$

\subsection{Quantum parameter estimation}
\label{subsec: quantum_parameter_estimation}
Here, we discuss exclusively single parameter quantum estimation theory, though it can be extended to the multi-parameter case~\cite{Liu_2020,bhattacharyya2024precision,pal2025role}. We adopt a frequentist approach for parameter estimation, utilizing the local unbiasedness conditions for estimators.

Let us consider a quantum process characterized by a parameter $p$ to be estimated, with $p_0$ denoting its true value.
Let a quantum state $\rho_{0} :\mathbb{C}^{d} \to \mathbb{C}^{d}$ act as a quantum probe, and suppose that the 
process transforms the probe $\rho_{0}$ into the resulting state, $\rho_{p}$.
To estimate the parameter of interest, a quantum measurement, say $\{\Pi_z\}$, satisfying $\Pi_z \geq 0$ and $\sum_z \Pi_z = \mathbb{I}_d$, is performed on the encoded state $\rho_{p}$. According to Born's rule, the probability of obtaining the measurement outcome $z$ when performing measurement $\{\Pi_z\}$ on the state $\rho_{p}$ is given by  
$q_{p}(z) \coloneqq \Tr[\rho_{p} \Pi_z]$.

An estimator, denoted by $\hat{\Theta}(z)$ and modeled as a random variable corresponding to each measurement outcome $z$, is defined within the framework of parameter estimation to quantitatively characterize the estimation process. 
In quantum parameter estimation theory, the estimation process is quantified by the mean square error (MSE) of the estimator, given by  
\begin{align*}
    \text{MSE}(\rho_{p_0}, \Pi_z, \hat{\Theta}) &\coloneqq \int  q_{p_0}(z) (\hat{\Theta}(z) - p_0)^2 dz.
\end{align*}
In the frequentist approach, only unbiased estimators are considered to estimate the parameter which is built by imposing two constraints on estimators, known as local unbiasedness conditions. 
These are given by  
\begin{align*}
    &\int dz \, q_{p_0}(z) 
    \hat{\Theta}(z) = p_0,\quad
    &\int dz \, \hat{\Theta}(z) \frac{dq_{p} (z)}{dp} \bigg|_{p = p_0} = 1.
\end{align*}
Under these local unbiased conditions, the mean square error of estimator $\text{MSE}(\rho_{p_0}, \Pi_z, \hat{\Theta})$ coincides with the variance of the estimator, denoted as $\Delta^2 (\rho_{p_0}, \Pi_z, \hat{\Theta})$.

Note that, in order to achieve the best possible precision in estimating the parameter of interest, one needs to minimize the variance of the estimator over the choice of estimator, measurement, as well as the input state.
The classical Cram\'er-Rao bound establishes a lower bound on the variance of estimator upon minimizing over all the estimators for a fixed measurement and fixed probe. 
The lower bound on the variance of estimator is given by the inverse of $N$ times the classical Fisher information, $\mathbb{F}_\text{C}$, expressed as  
\begin{align}
 \Delta^2 (\rho_{p_0}, \Pi_z, \hat{\Theta}) &\geq \frac{1}{N\mathbb{F}_\text{C}\left(\Pi_z,\rho_{p_0}\right)},
\end{align}
where $N$ represents the number of repetitions of the experiment, and the classical Fisher information $\mathbb{F}_\text{C}$ is defined as  
\begin{equation*}
    \mathbb{F}_\text{C} \left(\Pi_z, \rho_{p_0}\right) \coloneqq \int dz \, q_{p_0}(z)
\left( \left. \frac{\partial \ln q_p(z)}{\partial p} \right|_{p = p_0} \right)^2
\end{equation*}
Moreover, by optimizing over all possible quantum measurements for a given probe, a lower bound on the classical Cram\'er-Rao bound is derived which is 
Symmetric Logarithmic Derivative Cram\'er-Rao Bound (SLD-CRB). 
The SLD-CRB is given by the inverse of $N$ times the symmetric logarithmic derivative quantum Fisher information (SLD-QFI), denoted as $\mathbb{F}_\text{Q}\left(\rho_{p_{0}}\right)$ i.e.,
\begin{align}\label{eq: QCRB}
   \Delta^2 (\rho_{p_0}, \Pi_z, \hat{\Theta}) &\geq \frac{1}{N\mathbb{F}_\text{C}\left(\Pi_z,\rho_{p_{0}}\right)}\geq \frac{1}{N\mathbb{F}_\text{Q}\left(\rho_{p_{0}}\right)},
\end{align}
where SLD-QFI is defined as $\mathbb{F}_\text{Q}\left(\rho_{p_{0}}\right) \coloneqq \Tr(\rho_{p_0} L_{p_0}^2)$ which
satisfy the relation $\mathbb{F}_\text{C} \leq \mathbb{F}_\text{Q}$.
Here, $L_{p}$ denotes the symmetric logarithmic derivative (SLD) operator, satisfying the equation  
\begin{align*}
    \frac{\partial \rho_{p}}{\partial p}\big|_{p = p_0} = \frac{1}{2} (L_p \rho_p + \rho_p L_p)\big|_{p = p_0},
\end{align*}
where $\frac{\partial \rho_{p}}{\partial p}$ represents the first-order partial derivative of the state $\rho_{p}$ with respect to $p$.  
There always exist at least a measurement that saturates the QCRB in single parameter estimation.

However, SLD-CRB lower bound does not account for optimization over the input state itself, i.e.,
the SLD-CRB lower bound of 
$\Delta^2 (\rho_{p_0}, \Pi_z, \hat{\Theta})$ is dependent on the given probe $\rho_0$, which is within our control. So, to get best achievable precision in any estimation task, one needs to optimize $\mathbb{F}_\text{Q}$ over input states. From now on we will use the term ``QFI" for the abbreviation of symmetric logarithmic derivative quantum Fisher information.

We now outline several important properties of the QFI~\cite{Liu_2020}, which will be crucial for our analysis:

\begin{itemize}
    \item \textit{Relation to fidelity:} 
    The fidelity~\cite{Jozsa01121994} between two states, say $\rho_{1}$ and $\rho_{2}$, is defined as follows: $\mathcal{F}\left(\rho_{1},\rho_{2}\right)\coloneqq\Tr\left[\sqrt{\rho^{1/2}_{1}\rho_{2}\rho^{1/2}_{1}}\right]^{1/2}$ where $\Tr\left[A\right]$ denotes the trace of the operator $A$. The QFI of the encoded state $\rho_p$ is related to fidelity between encoded states $\rho_p$ and $\rho_{p+dp}$ as~\cite{zhou2019exactcorrespondencequantumfisher} 
\begin{equation*}
\mathcal{F}\left(\rho_{p},\rho_{p+dp}\right)=1-\frac{1}{4}\mathbb{F}_{\text{Q}}\left(\rho_{p}\right)dp^{2},
\end{equation*}
in the limit $dp \rightarrow 0$.
This relationship highlights that as the infinitesimal distinguishability between nearby states parameterized by $p$ increases, the QFI also increases.

\item \textit{Convexity:} The QFI satisfies a convexity property over convex mixtures of encoded states parameterized by $p$. Specifically, for a set of encoded states $\{\rho_p^{\mu}\}$ and corresponding $p$-independent probabilities $\{a_{\mu}\}$ with $\sum_{\mu} a_{\mu} = 1$, the QFI obeys the inequality
\begin{equation*}
\mathbb{F}_{\text{Q}}\left(\sum_{\mu} a_{\mu} \rho_p^{\mu} \right) \leq \sum_{\mu} a_{\mu}\mathbb{F}_{\text{Q}}\left(\rho_p^{\mu}\right).
\end{equation*}
This implies that probabilistic mixing of encoded states cannot increase the QFI.

\item \textit{Invariance under unitary:} 
The QFI of an encoded state is invariant under unitary transformations that are independent of the parameter of interest. Specifically, if the unitary $U$ does not depend on the parameter of interest $p$, then
\begin{equation}
\label{eq: unitary invariance of QFI}
\mathbb{F}_{\text{Q}}\left(\rho_{p}\right) = \mathbb{F}_{\text{Q}}\left(U\rho_{p}U^{\dagger}\right).
\end{equation}
\end{itemize}

\subsection{Vector encoding}
Here, we will discuss a large class of encoding by quantum maps, which we will refer to as vector encoding.

We will call an encoding as 
a quantum vector encoding process, with respect to the parameter \(p\), if, in the operator sum representation, 
its action on a quantum state $\rho_{0}:\mathbb{C}^{d} \to \mathbb{C}^{d}$ produces the final state as follows:
\begin{equation}
\label{eq: defn. of alpahcp map}
\alpha_{p}\left(\rho_{0}\right):=\sum_{i} f_{i}(p)M_{i}\rho_{0}M^{\dagger}_{i},
\end{equation}
where the functions $f_i(p)$ depend on the noise parameter $p$, while the operators $M_i$ are independent of $p$ satisfying the completeness condition $\sum_i f_i(p) M_i^{\dagger} M_i = \mathbb{I}_d$. If a quantum channel can be expressed in this form - with \(p\)-independent $M_i$ and all dependence on $p$ contained in the weights $f_i(p)$ - we refer to it as a vector encoding with respect to the parameter \(p\). The corresponding Kraus operators are $\sqrt{f_i(p)} M_i$.
Common examples of vector encoding include the encoding with bit-flip channel, phase-flip channel, dephasing, and depolarizing channels, with respect to the noise strengths of the channels.
However, not all CPTP maps fall into this category. For instance, the amplitude-damping channel cannot be expressed in the form in Eq.~\eqref{eq: defn. of alpahcp map}, and hence vector encoding cannot be done with it. This falls under a category that we may term as matrix encoding, in which case all Kraus operators cannot be expressed as a product of a noise-parameter-dependent function and a noise-parameter-independent operator. 
There would be a ``scaler encoding'', when \(f_i(p)\) in Eq.~(\ref{eq: defn. of alpahcp map}) is independent of \(i\), which however is  
not possible due to the normalization property of quantum states.

\subsubsection{Depolarizing channel} 
Since the depolarizing channel is a completely positive and trace-preserving (CPTP) map, it admits a Kraus operator representation. For a single-qubit system, the Kraus operators of the depolarizing channel are given by $\{\sqrt{1 - p}\mathbb{I}_2$, $\sqrt{p/3}\sigma^x$, $\sqrt{p/3}\sigma^y$, and $\sqrt{p/3}\sigma^z\}$, where $p$ denotes the noise strength of the channel~\cite{Nielsen_Chuang_2010}. In this case, the channel acts equally on all single-qubit states with the same purity, contracting only their Bloch vectors uniformly by a factor in magnitude.
The action of a depolarizing channel can be extended to single-qudit quantum states in arbitrary dimensions. In this case, the Kraus operators are given by $\{\sqrt{1 - p}\mathbb{I}_d, \sqrt{p/(d^2-1)} \lambda^{(1)}, \cdots,\sqrt{p/(d^2-1)} \lambda^{(d^2-1)} \}$, 
where ${ \lambda^{(i)} }$ for $i = \{1, \cdots, (d^2 - 1) \}$ denotes a fixed set of generators of the $\mathbb{SU}(d)$ group corresponding to the $d$-dimensional Hilbert space.
Accordingly, the action of the arbitrary-dimensional depolarizing channel $\Lambda_p$ on an input state $\rho_0 :\mathbb{C}^{d} \to \mathbb{C}^{d}$ is given by
\begin{align}
    \Lambda_{p} (\rho_{0}) = \left(1-\frac{4p}{3}\right)\rho_{0} + \frac{4p}{3} \operatorname{Tr}(\rho_{0}) \frac{\mathbb{I}_{d}}{d}.
\end{align}
In our convention, the noise strength $p$ of the depolarizing channel is restricted to the interval $[0, 3/4]$. When $p = 0$, the channel acts trivially, leaving the input state unchanged. On the other hand, when $p = 3/4$, the channel completely depolarizes the state, mapping it to the maximally mixed state. It is also worth noting that encoding via the depolarizing channel satisfies the properties of vector encoding.

An important property of any local depolarizing channel, $(\Lambda_{p_1} \otimes \Lambda_{p_2} \otimes \cdots \otimes \Lambda_{p_n}) : \mathcal{H}_d \rightarrow \mathcal{H}_d$, acting on a multi-partite quantum state $\rho_0 : (\mathbb{C}^d)^{\otimes n} \to (\mathbb{C}^d)^{\otimes n}$, is that it commutes with local unitary operations. I.e.,
\begin{widetext}
\begin{align}
\label{eq: effect of local unitary on depolarising channel}
\Lambda{p_1} \otimes \Lambda_{p_2} \otimes \cdots \otimes \Lambda_{p_n}\left(U_{l} \rho_0 U_{l}^{\dagger}\right) = U_{l} (\Lambda_{p_1} \otimes \Lambda_{p_2} \otimes \cdots \otimes \Lambda_{p_n} \left(\rho_0\right)) U_{l}^{\dagger},
\end{align}
\end{widetext}
where 
$U_l \coloneqq \bigotimes_{i=1}^n u_i$ denotes a local unitary operation composed of unitaries $u_i$, acting individually on each of the $n$ subsystems of the state $\rho_0$. The elements withing the sets $\{p_i\}$ and $\{u_i\}$ can be the same or different from each other.

\subsubsection{Bit-flip channel}
The action of any bit-flip channel, denoted as $\beta_{p}$, on a single qubit $\rho_{0}$ can be mathematically described as 
\begin{equation*}
 \beta_{p}\left(\rho_{0}\right)=(1-p)\rho_{0}+p\sigma^{x}\rho_{0}\sigma^{x}.  
\end{equation*}
Here $p \in [0,1]$ denotes the noise strength of bit-flip channel.
Note that the action of noise in bit-flip channel satisfies the property of vector encoding. 
Moreover, it plays a central role in the development and analysis of quantum error correction (QEC) codes~\cite{Nielsen_Chuang_2010}. It is therefore interesting to estimate the noise strength of a bit-flip channel.

\begin{figure}
   \includegraphics[scale=0.32]{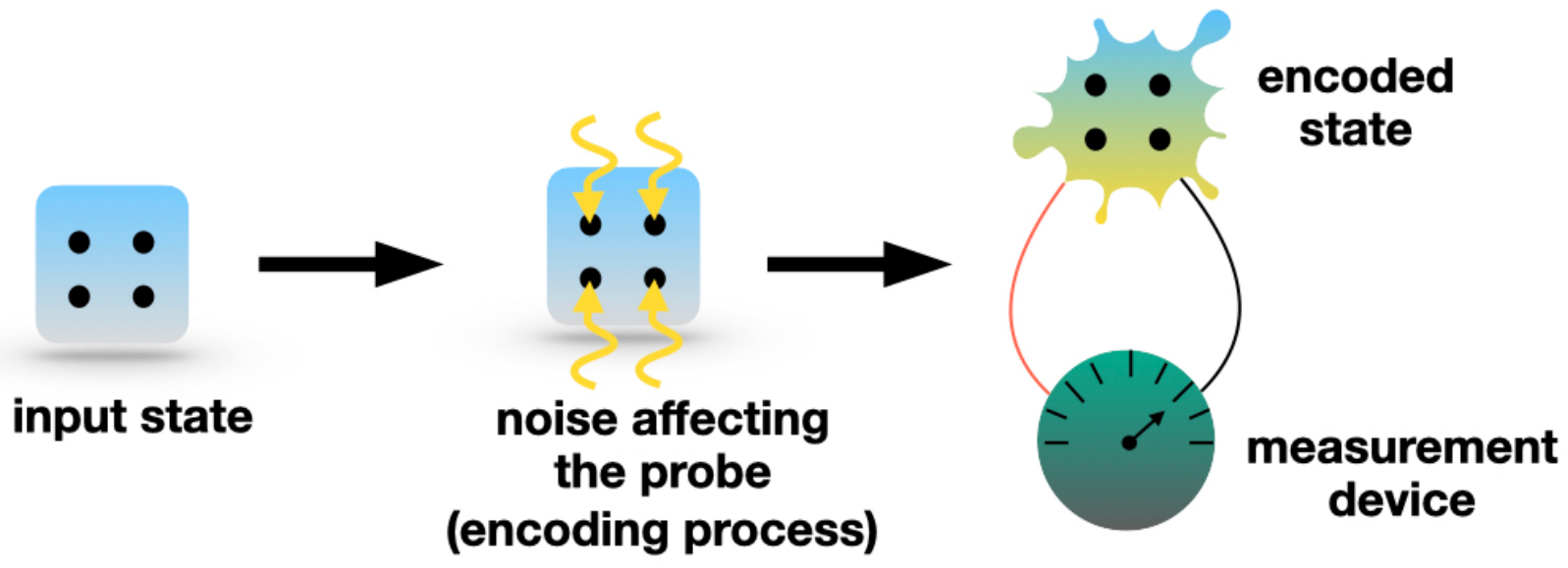}
    \caption{\textbf{Schematic diagram of setup for quantum parameter estimation.} Each quantum parameter estimation process can be divided into three steps: input state preparation, the encoding process, and finally the readout, which involves extracting statistics based on measurements performed on the encoded state. In the first step, the blue-colored box represents the input state or probe, with each dot corresponding to an individual party of the probe. In the second step, the encoding process considered in our work, the noisy channels (depicted by yellow-colored curvy lines) act locally and identically on each party of the input state. 
    The resulting encoded state is described by a blue-yellowish splatter. In the final step, measurements are performed on the encoded state and a black-lined, deep green-colored circle represents a measurement device. Based on the statistical outcomes of these measurements, the strength of the noise  channels is estimated.
    }
    \label{fig:diagram}
\end{figure}

\section{Necessary criterion for optimal probes of noise-parameter estimation in vector encoding}
\label{sec3}
The SLD-CRB for estimating any parameter of interest depends on the choice of the input quantum state. Therefore, to achieve the best possible precision in any estimation task, one must optimize the SLD-CRB over all input states.
In this work, we focus on the role of entanglement in the optimal probe for estimating a single parameter encoded via vector encoding.
Throughout the paper, we adopt a parallel estimation scheme in which the local quantum channel or vector encoding acts locally and identically on each subsystem of the multipartite probe, with the noise parameter of the local channel being the parameter of interest.
This parallel estimation scheme is described schematically in Fig.~\ref{fig:diagram}. 
In this section, we state a necessary criterion to find the optimal probe for estimating the parameter encoded with vector encoding under a parallel estimation scheme.

\begin{theorem}
\label{theorem-necessary-alpha-cp}
In a metrological task involving the estimation of any single parameter $p$ encoded through vector encoding $\alpha_p$, under a parallel estimation scheme, an input state $\rho_0: (\mathbb{C}^d)^{\otimes n} \to (\mathbb{C}^d)^{\otimes n} $ acts as an optimal probe if, in the limit $dp \to 0$, the following condition holds:
\begin{equation*}
\label{eq: first_commutation}
\left[ \rho_p, \rho_{p + dp} \right] = 0.
\end{equation*}
Here, $\rho_p \coloneqq \alpha_p^{\otimes n}(\rho_0)$ denotes the encoded state 
obtained by applying the channel $\alpha_p$ identically and locally to each of the $n$ subsystems of the input state.
\end{theorem}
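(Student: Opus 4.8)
The goal is the necessary criterion behind the statement: I will show that any probe maximizing the QFI must satisfy $[\rho_p,\rho_{p+dp}]=0$ as $dp\to0$. The plan is first to put the condition in differential form. Since $\rho_{p+dp}=\rho_p+dp\,\partial_p\rho_p+O(dp^2)$, one has $[\rho_p,\rho_{p+dp}]=dp\,[\rho_p,\partial_p\rho_p]+O(dp^2)$, so continuous commutativity is equivalent to $[\rho_p,\partial_p\rho_p]=0$, i.e. to $\partial_p\rho_p$ being block-diagonal in the eigenbasis of $\rho_p$. It then suffices to treat pure input probes: the encoding $\alpha_p^{\otimes n}$ is linear and CPTP, and by the convexity of the QFI a mixture of inputs can never exceed the best of its components, so an optimal probe may be taken to be $\rho_0=\lvert\psi\rangle\langle\psi\rvert$. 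Conceptually this is the same as minimizing the fidelity $\mathcal{F}(\rho_p,\rho_{p+dp})$ via the fidelity--QFI relation quoted above, but the eigenbasis picture is what couples directly to commutativity.

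Next I would bring in the defining structure of vector encoding: writing $\rho_p=\sum_{\vec\imath} g_{\vec\imath}(p)\,N_{\vec\imath}\,\rho_0\,N_{\vec\imath}^{\dagger}$ with fixed operators $N_{\vec\imath}=M_{i_1}\otimes\cdots\otimes M_{i_n}$ and scalar weights $g_{\vec\imath}(p)=\prod_a f_{i_a}(p)$, the derivative $\partial_p\rho_p$ is assembled from the \emph{same} $p$-independent operators with merely reweighted coefficients. This locks the coherences (off-diagonal parts in the eigenbasis of $\rho_p$) of $\rho_p$ and of $\partial_p\rho_p$ into a common operator subspace, so that the vanishing of $[\rho_p,\partial_p\rho_p]$ reduces to a condition relating how the spectral gaps of $\rho_p$ and these coherences co-vary with $p$. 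I would verify, as in the two-qudit depolarizing computation, that this condition is met exactly by the symmetric, flat-Schmidt-spectrum probes (maximally entangled on their support), for which the coherence-carrying blocks scale in $p$ proportionally to the relevant spectral gaps.

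The decisive and hardest step is to turn this algebra into a statement about the maximizer. Using the eigendecomposition of the QFI,
\begin{equation*}
\mathbb{F}_{\text{Q}}(\rho_p)=\sum_k \frac{(\partial_p\lambda_k)^2}{\lambda_k}+2\sum_{k\neq l}\frac{(\lambda_k-\lambda_l)^2}{\lambda_k+\lambda_l}\,\lvert\langle e_k|\partial_p e_l\rangle\rvert^2 ,
\end{equation*}
continuous commutativity is precisely the statement that the second (eigenvector-rotation) sum vanishes and the QFI collapses to the classical Fisher information of the eigenvalue distribution. I would then show that a maximizer must lie on this commuting locus by a first-order variation of $\lvert\psi\rangle$: any nonzero rotation term present at a candidate optimum can be removed while strictly increasing the eigenvalue term, contradicting optimality. \emph{Here lies the main obstacle}: the rotation sum is manifestly non-negative and, in isolation, would seem to reward coherent, non-commuting probes; the crux is therefore to show that the vector-encoding constraint couples the spectrum to the eigenbasis rotation so tightly that rotating the eigenvectors always sacrifices more classical Fisher information than the coherence it buys. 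I expect to make this trade-off explicit using the channel's symmetry --- for the depolarizing case its invariance under local unitaries, Eq.~\eqref{eq: effect of local unitary on depolarising channel} --- which lets one restrict to a low-dimensional family of probes on which the two contributions can be compared directly and the optimum pinned to the commuting configuration.
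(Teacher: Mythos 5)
Your proposal correctly recasts continuous commutativity as $[\rho_p,\partial_p\rho_p]=0$, correctly restricts to pure probes via convexity, and correctly locates the crux --- but it never closes it. The decisive step is only announced (``I would then show that a maximizer must lie on this commuting locus\ldots'', ``I expect to make this trade-off explicit\ldots''), not carried out, and as you yourself concede, the eigendecomposition of the QFI works against you: the eigenvector-rotation term is non-negative, so forcing it to vanish looks like a loss, and nothing in your sketch produces the compensating gain in the classical (eigenvalue) term. A first-order variation of $\ket{\psi}$ cannot settle this on its own, because you have no benchmark for what the best attainable QFI is; you are left comparing two competing contributions whose trade-off \emph{is} the content of the theorem. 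Two further problems: the symmetry argument you fall back on is the local-unitary covariance of Eq.~\eqref{eq: effect of local unitary on depolarising channel}, which is special to the depolarizing channel, whereas the theorem concerns arbitrary vector encodings (e.g.\ bit flip); and you set out to prove ``optimal $\Rightarrow$ commuting,'' the converse of the implication literally asserted in the statement (the paper itself wavers between the two readings, but your argument should fix one and prove it).

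What the paper's proof supplies, and what your sketch lacks, is a \emph{probe-independent} benchmark. The paper applies strong concavity of the fidelity to the vector-encoding decompositions $\rho_{p}=\sum_i f_i(p)\, M_i\rho_0 M_i^\dagger$ and $\rho_{p+dp}=\sum_i f_i(p+dp)\, M_i\rho_0 M_i^\dagger$ (same operators, reweighted coefficients --- the structural fact you noticed but did not exploit), obtaining
\begin{equation*}
\mathcal{F}\left(\rho_{p},\rho_{p+dp}\right) \;\geq\; \sum_i \sqrt{f_i(p)\,f_i(p+dp)}\,,
\end{equation*}
a lower bound independent of $\rho_0$. Any probe saturating this bound therefore minimizes the fidelity over \emph{all} probes simultaneously, and by the fidelity--QFI relation it maximizes the QFI; the paper then identifies saturation with the commutation condition (itself a subtle identification --- orthogonal supports of the $M_i\rho_0 M_i^\dagger$ is the clean sufficient condition, as the paper's own bit-flip analysis later shows --- but that is the paper's burden, not yours). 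This single inequality converts the local algebraic observation into a global optimality statement and bypasses the classical-versus-rotation trade-off entirely. Without some analogue of it, your variational route has a genuine hole exactly where you flagged the ``main obstacle.''
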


\begin{remark}
The condition on the encoding and input in the theorem can be referred to as the ``continuously commutative'' property of the encoding process. Therefore the theorem states that any vector encoding process is continuously commutative for optimal probes. 
\end{remark}

\begin{proof}
Here, $\left[ A, B\right]$ denotes the commutator between matrices $A$ and $B$. Let $\alpha_{p_1}$ and $\alpha_{p_2}$ be two vector encodings characterized by the encoded parameters $p_1$ and $p_2$, respectively.
These maps act locally and identically on each subsystem of an input state $\rho_0 : (\mathbb{C}^d)^{\otimes n} \to (\mathbb{C}^d)^{\otimes n}$, resulting in the encoded states
\begin{equation*}
    \rho_{p_1} \coloneqq \alpha_{p_1}^{\otimes n}(\rho_0), \quad \rho_{p_2} \coloneqq \alpha_{p_2}^{\otimes n}(\rho_0).
\end{equation*}
Using the form of the vector encoding, given in Eq.~\eqref{eq: defn. of alpahcp map}, the encoded states $\rho_{p_1}$ and $\rho_{p_2}$ can be written as
\begin{align*}
    \rho_{p_1} &= \sum_{i} f_i(p_1)\, M_i \rho_0 M_i^{\dagger}, \quad \rho_{p_2} = \sum_{i} f_i(p_2)\, M_i \rho_0 M_i^{\dagger}.
\end{align*}

A lower bound on the fidelity between the states $\rho_{p_1}$ and $\rho_{p_2}$, denoted by $\mathcal{F}(\rho_{p_1}, \rho_{p_2})$, can be obtained as follows:
\begin{align}
    \mathcal{F}(\rho_{p_1}, \rho_{p_2}) 
    &\geq \sum_{i} \sqrt{f_i(p_1) f_i(p_2)}  \mathcal{F}\left(M_i \rho_0 M_i^{\dagger}, M_i \rho_0 M_i^{\dagger}\right) \nonumber \\
    &= \sum_i \sqrt{f_i(p_1) f_i(p_2)}, \label{eq:theorem}
\end{align}
where the first inequality follows from the strong concavity property of fidelity~\cite{Nielsen_Chuang_2010}, and the last equality uses the fact that the fidelity between two identical quantum states is equal to one.
The lower bound on the fidelity between the states $\rho_{p_1}$ and $\rho_{p_2}$, as given in Eq.~\eqref{eq:theorem}, is saturated when the encoded states commute with each other, i.e., when
$\left[\rho_{p_1}, \rho_{p_2}\right] = 0$.

From the relation between the QFI of the encoded state $\rho_p$ and the fidelity between the states $\rho_p$ and $\rho_{p+dp}$ in the limit $dp \to 0$, as discussed in Section~\ref{sec2}, it follows that the QFI is maximized over the probes when the fidelity 
$\mathcal{F}\left(\alpha^{\otimes n}_p(\rho_0), \alpha^{\otimes n}_{p+dp}(\rho_0)\right)$
is minimized.
Therefore, among all possible probes, the input states $\rho_0$ that satisfy the commutation condition $\left[ \alpha^{\otimes n}p(\rho_0), \alpha^{\otimes n}{p+dp}(\rho_0) \right] = 0$ in the limit $dp \to 0$ yield higher precision in estimating the noise parameter $p$ of the vector encoding.

This completes the proof.
\end{proof}

Let us define
\begin{equation}
\label{eq-theo-coro}
Q_i \coloneqq \sum_{\substack{j \neq i \\ f_j(p) = f_i(p)}} \left( M_i \rho_0 M_i^\dagger + M_j \rho_0 M_j^\dagger \right),
\end{equation}
where $\rho_0$ is the input state on which the vector encoding process acts on and $\{\sqrt{f_i(p)} M_i\}$ denotes the Kraus operators of an vector encoding process written in Eq.~\eqref{eq: defn. of alpahcp map}. Theorem~\ref{theorem-necessary-alpha-cp} can be simplified as follows:

\begin{corollary}
A necessary criterion for any probe to be an optimal probe for estimating the noise parameter $p$ of the vector encoding process is that
\begin{equation}
    \label{eq: indiv. zero}
    \left[Q_i, Q_j\right] = 0 \quad \forall\, i \neq j.
\end{equation}

\end{corollary}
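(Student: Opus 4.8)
The plan is to start directly from the commutation condition furnished by Theorem~\ref{theorem-necessary-alpha-cp} and expand it in the operator-sum representation of Eq.~\eqref{eq: defn. of alpahcp map}, then reorganize the resulting double sum by grouping together Kraus terms that carry the same weight function. Writing $R_i \coloneqq M_i \rho_0 M_i^{\dagger}$ for the $p$-independent building blocks, where the index now ranges over the Kraus operators of the full parallel encoding $\alpha_p^{\otimes n}$ (so that the $f_i$ and $M_i$ are the product weights and product operators), the encoded state is $\rho_p = \sum_i f_i(p)\,R_i$ and the commutator governing optimality reads
\begin{equation*}
\left[\rho_{p_1},\rho_{p_2}\right] = \sum_{i,j} f_i(p_1)\,f_j(p_2)\,[R_i,R_j].
\end{equation*}
First I would use the antisymmetry of the commutator to pair $(i,j)$ with $(j,i)$, so that each contribution becomes $\bigl(f_i(p_1)f_j(p_2)-f_j(p_1)f_i(p_2)\bigr)[R_i,R_j]$. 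The key structural observation is that this prefactor vanishes identically whenever $f_i\equiv f_j$; hence every pair of Kraus indices sharing a common weight drops out of the commutator, and only cross-weight pairs survive.

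Next I would collect the indices into groups of equal weight and form the aggregate operators of Eq.~\eqref{eq-theo-coro}, so that $Q_i$ gathers all the equal-weight Kraus contributions attached to the weight of $i$; indices in the same group produce the same aggregate. Under this grouping the double sum collapses onto pairs of \emph{distinct} weight functions, and denoting by $g_a$ the common weight of group $a$ one obtains $\bigl[\rho_{p_1},\rho_{p_2}\bigr]=\sum_{a\neq b} g_a(p_1)g_b(p_2)\,[Q_a,Q_b]$, i.e.\ a fixed linear combination of the aggregate commutators $[Q_a,Q_b]$ with parameter-dependent, Wronskian-type coefficients. This is the reduction that turns the unwieldy commutation of the encoded states into a statement about the finitely many grouped operators.

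The main step, and where I expect the real work to lie, is the passage from ``this linear combination vanishes'' to ``each $[Q_i,Q_j]=0$''. For this I would invoke that the strong-concavity fidelity bound in the proof of Theorem~\ref{theorem-necessary-alpha-cp} is saturated for the encoded states at \emph{arbitrary} parameter values, so the commutator must vanish for all $p_1,p_2$ and not merely in the infinitesimal limit. Since the distinct weight functions $g_a(p)$ are linearly independent — for the depolarizing and bit-flip families they are, up to constants, Bernstein-type polynomials $(1-p)^{\,n-k}p^{\,k}$, which are independent — the identity $\sum_{a,b} g_a(p_1)g_b(p_2)[Q_a,Q_b]=0$ holding for all $p_1,p_2$ forces each coefficient $[Q_a,Q_b]$ to vanish separately, giving $[Q_i,Q_j]=0$ for all $i\neq j$. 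The one delicate input is precisely this linear independence: without it only a single weighted combination of the $[Q_a,Q_b]$ would be constrained, so I would make the independence of the weight functions explicit as the hypothesis that renders the criterion term-by-term rather than merely an aggregate constraint.
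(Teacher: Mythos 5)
Your overall skeleton matches the paper's: both proofs expand $[\rho_{p_1},\rho_{p_2}]$ in the operator-sum form of Eq.~\eqref{eq: defn. of alpahcp map} and reorganize the Kraus terms by equal weight functions into the aggregates of Eq.~\eqref{eq-theo-coro}, and your antisymmetrization step (equal-weight pairs cancel identically, so only cross-weight pairs survive) is a genuine sharpening of the paper's one-line ``simplification'' into Eq.~\eqref{eq:corollary}. The gap is in your final step. You deduce $[Q_a,Q_b]=0$ from the identity $\sum_{a\neq b}g_a(p_1)g_b(p_2)[Q_a,Q_b]=0$ holding for \emph{all} pairs $(p_1,p_2)$, and you justify that two-variable identity by asserting that optimality forces the strong-concavity bound to be saturated ``at arbitrary parameter values.'' That premise is not available. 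Theorem~\ref{theorem-necessary-alpha-cp} is purely local: the correspondence $\mathcal{F}(\rho_p,\rho_{p+dp})=1-\frac{1}{4}\mathbb{F}_{\text{Q}}(\rho_p)\,dp^2$ ties optimality (maximal QFI at the true value) only to the fidelity at infinitesimal separation, and maximizing the QFI says nothing about $\mathcal{F}(\rho_{p_1},\rho_{p_2})$, hence nothing about $[\rho_{p_1},\rho_{p_2}]$, at finite separation. (That the encoded states for the depolarizing channel's optimal probes do commute at all separations is verified a posteriori for those specific states; it cannot be invoked as a consequence of optimality when deriving the necessary criterion.) So your linear-independence argument is applied to an identity you are not entitled to.

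The repair keeps you in the infinitesimal regime, which is where the paper's own proof stays. In your grouped sum set $p_1=p$, $p_2=p+dp$ and expand $g_b(p+dp)=g_b(p)+g_b'(p)\,dp+O(dp^2)$: the zeroth-order term vanishes automatically by the very antisymmetry you already noted (its coefficients $g_a(p)g_b(p)$ are symmetric in $a,b$ while $[Q_a,Q_b]$ is antisymmetric), and the first-order term yields $\sum_{a<b}\bigl(g_a(p)g_b'(p)-g_b(p)g_a'(p)\bigr)[Q_a,Q_b]=0$. This is one constraint per value of $p$; demanding it as an identity in $p$ is legitimate (the criterion must hold whatever the unknown true value is), and then the term-by-term conclusion follows provided the Wronskian coefficients $W_{ab}=g_ag_b'-g_bg_a'$ are linearly independent as functions of $p$. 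Note this is a different, and strictly stronger, hypothesis than the one you propose: linear independence of the weights $g_a$ does not imply linear independence of the family $\{W_{ab}\}_{a<b}$, so your stated assumption is attached to the wrong identity. With that substitution your argument closes, and it makes explicit the hypothesis that the paper's proof silently assumes in passing from Eq.~\eqref{eq:corollary} to the pairwise conditions. A minor further point: your claim that ``indices in the same group produce the same aggregate'' holds for the natural group-sum reading $Q_a=\sum_{i\in G_a}M_i\rho_0M_i^\dagger$, not for the literal Eq.~\eqref{eq-theo-coro}, which weights $M_i\rho_0M_i^\dagger$ by the size of its weight class; you should flag that deviation explicitly.
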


\begin{proof}
By substituting the explicit forms of $\rho_p$ and $\rho_{p+dp}$ in terms of the Kraus operators of the vector encoding process into the necessary criterion given in Theorem~\ref{theorem-necessary-alpha-cp}, the condition for optimal input state(s) simplifies to
\begin{align}
    &\sum_{i,j} f_i(p) f_j(p+dp) \left[ M_i \rho_0 M_i^\dagger , M_j \rho_0 M_j^\dagger \right] = 0 \notag \\
    \Rightarrow &\sum_{\substack{j \neq i \\ f_j(p) = f_i(p)}} f_i(p) f_j(p+dp) \left[Q_i, Q_j\right] = 0, \label{eq:corollary}
\end{align}
where $\{Q_i\}$ are defined in Eq.~\eqref{eq-theo-coro}.  
From Eq.~\eqref{eq:corollary}, it follows that the commutation relation,
$\left[Q_i, Q_j\right] = 0; \hspace{1 mm} \forall\, i \neq j$,
must hold for any probe to be optimal in this estimation task.

This completes the proof.
\end{proof}

\section{Entanglement in optimal probe for noise-parameter estimation of depolarizing channel}
\label{sec4}

In this section, we characterize the entanglement properties of the optimal probe for estimating the noise parameter of a typical vector encoding process,
the depolarizing channel, in three cases where the probes are restricted to: (i) two-qudit (ii) arbitrary multi-qubit, and (iii) arbitrary multi-qudit dimension. Throughout this work, we restrict our analysis to pure input states while optimizing the SLD-CRB in each estimation task, since the QFI satisfies convex property over the space of encoded states, as discussed in Section~\ref{sec2}.

By combining the invariance of QFI under parameter-independent unitaries, as discussed in Eq.~\eqref{eq: unitary invariance of QFI}, with the commutation relation in between local unitaries $(U_l)$ and the local depolarizing channel $(\Lambda_{p})$, as written in Eq.~\eqref{eq: effect of local unitary on depolarising channel}, we obtain
\begin{equation*}
\mathbb{F}_{\text{Q}} \left(\Lambda_{p}^{\otimes n}\left(U_{l}\rho_{0}U_{l}^{\dagger}\right)\right) =\mathbb{F}_{\text{Q}} \left(\Lambda_{p}^{\otimes n}\left(\rho_{0}\right)\right).
\end{equation*}
Here, $U_l \coloneqq \otimes_{i=1}^n u_i$ represents a unitary operation composed of of local unitaries $u_i$ which acts locally on the $i$-th subsystem of the $n$-partite state $\rho_0$, and $\Lambda_p$ denotes the local depolarizing channel.
It implies that, for estimating the noise parameter of a depolarizing channel under the parallel estimation scheme, it suffices to consider input states in their Schmidt-decomposition form. Moreover, once an optimal input state is found, all input states related to it by local unitary operations also act as optimal probes in this estimation task.

\subsection{Two-qudit probes}
\label{two-qudit}
We analyze the entanglement properties of optimal probes for estimating a two-qudit local depolarizing channel under a parallel estimation scheme, restricting the probes to two-qudit states of arbitrary but equal dimension.

Any two-qudit pure state of arbitrary but equal dimension $d$ can, without loss of generality, be expressed in the Schmidt-decomposition form as
\begin{equation*}
    \ket{\psi_0} = \sum_{i=1}^{d} \sqrt{c_i} \ket{ii},
\end{equation*}
where $\{c_i\}$ are the Schmidt coefficients and $\{\ket{i}\}$ denotes any local orthonormal basis. In this analysis, we can consider the state $\ket{\psi_0}$ as the input state.
Based on the necessary criterion discussed in Section~\ref{sec3}, the optimal probe for estimating the noise parameter of a two-qudit local depolarizing channel under the parallel estimation scheme using two-qudit probes of equal, arbitrary dimension must have the form
\begin{equation}
\label{two-qudit-probe}
    \ket{\psi_m} = \frac{1}{\sqrt{m}} \sum_{i=1}^{m \leq d} \ket{ii},
\end{equation}
with Schmidt rank $m \in \mathbb{Z} \cap 1 \leq m \leq d$ (For detailed calculation, see Appendix~\ref{ax:form_of_optimum_state}).
Let $\mathbb{F}_{\text{Q}}(p, m, d)$ denotes the QFI associated with the input state given in Eq.~\eqref{two-qudit-probe}, characterized by Schmidt rank $m$.
It takes the following form:
\begin{align*} 
\mathbb{F}_{\text{Q}}(p, m,d) &= \frac{64}{9} \Big[ \left(1 - x\right)^2 
+ \frac{1}{d^2} A_1(x, \epsilon, d) \\
&+ \left(x^2 - \frac{1}{d^2}\right) A_2(x, \epsilon, d) 
+ 2x(1 - x) A_3(x, \epsilon, d) \Big].
\end{align*}
where $A_1(x,\epsilon,d) \coloneqq \frac{\left[1-\left(1+d^{2}\right)\epsilon-\frac{\left(1-2\epsilon\right)}{x}\right]^{2}}{d^{2}\epsilon+\left(1-\epsilon\right)^{2}+\frac{2\epsilon\left(1-\epsilon\right)}{x}}$, $A_2(x,\epsilon,d) \coloneqq \frac{\left[1-\epsilon-\frac{\left(1-2\epsilon\right)}{x}\right]^{2}}{\left(1-\epsilon\right)^{2}+\frac{2\epsilon\left(1-\epsilon\right)}{x}}$, and $A_3(x,\epsilon,d) \coloneqq \frac{\left[1-\epsilon-\frac{\left(1-2\epsilon\right)}{2x}\right]^{2}}{\left(1-\epsilon\right)^{2}+\frac{\epsilon\left(1-\epsilon\right)}{x}}$
Here we have used the notations $x\coloneqq \frac{m}{d}$ and $\epsilon\coloneqq \left(1-\frac{4p}{3}\right)$. See Appendix~\ref{ax:form_of_Fisher_information} for a detailed derivation of QFI in this estimation scenario.

In order to find the optimal probe for estimating the each noise parameter value $p$ 
of a two-qudit local depolarizing channel acting on the Hilbert space of local arbitrary dimension $d$,
one must maximize $\mathbb{F}_{\text{Q}}(p, m, d)$ over the Schmidt rank $m \in \{1, 2, \ldots, d\}$.
Let us denote the Schmidt rank that maximizes $\mathbb{F}_{\text{Q}}(p, m, d)$ by $m_0$ when the value of parameter of interest is $p$.
Here, we use the von Neumann entropy of the reduced density matrix of a pure two-qudit state as the entanglement measure~\cite{Nielsen_Chuang_2010}.
The entanglement of the optimal probe, denoted as $E$, in this estimation scenario is then given by $E=\log_{2} \left(m_{0}\right)$.
We analyze the necessity of entanglement in achieving the optimal precision in this estimation scenario across different ranges of the parameter of interest $p$ below.\\

\noindent \textbf{Case I (Limit of weak depolarization and moderate dimension):}
In the limit of the small parameter of interest $p$, and for a moderate local dimension $d$ of Hilbert space, 
the QFI corresponding to the input states in Eq.~\eqref{two-qudit-probe}, $\mathbb{F}_{\text{Q}}(p,m,d)$, simplifies as follows:
\begin{align*}
&\mathbb{F}_{\text{Q}}(p,m,d) \approx \frac{1}{p}\left(\frac{8}{3}\right)\left(1-\frac{1}{x d^{2}}\right)-\frac{32}{3}(1-x)^{2} \notag \\ 
&+\frac{64}{9} \left(1-x\right)^{2}+\frac{64}{9}\left(1-\frac{1}{xd^{2}}\right)^{2} +\frac{16}{3x}(x-2)\left(x^{2}-\frac{1}{d^{2}}\right).
\end{align*}

Note that, in this special case, $\mathbb{F}_{\text{Q}}(p, m, d)$ increases monotonically with $x$ over the allowed range $\left[\frac{1}{d}, 1\right]$.
Therefore, $\mathbb{F}_{\text{Q}}(p, m, d)$ gets maximum at $x=1$ corresponding to the maximally entangled probe. Thus, in the limit of a small parameter of interest ($p$) and for a moderately dimensional ($d$) two-qudit local depolarizing channel, the optimal probe is given by $\ket{\psi}_{\text{opt}}^{p \to 0} = \frac{1}{\sqrt{d}} \sum_{i=1}^d \ket{ii}$,
where $\{\ket{i}\}$ forms any local orthonormal basis in the Hilbert space of arbitrary dimension $d$. It corresponds to a two-qudit maximally entangled probe, up to local unitaries.\\

\noindent\textbf{Case II (Limit of strong depolarization and moderate dimension):} 
When the local depolarizing channel is highly depolarized, i.e., the noise strength of single-qudit depolarizing channel, approaches $3/4$.
In this regime of parameter of interest $p \to \frac{3}{4}$, we have 
\begin{align*}
&\mathbb{F}_{\text{Q}}(p,m,d) \approx \frac{32}{9} \left(\frac{1}{x}-1\right) +\frac{128}{9}\epsilon \left(\frac{1-x}{2x}\right) \left(1-\frac{1}{2x} \right),
\end{align*}
where $\epsilon:= (1-\frac{4}{3}p)$. In the limit of parameter of interest $p \to \frac{3}{4}$ (i.e., $\epsilon \to 0$), $\mathbb{F}_{\text{Q}}(p,m,d)$ becomes a monotonically decreasing function of $x$ over the interval $\frac{1}{d} \leq x \leq 1$. Hence, $\mathbb{F}_{\text{Q}}(p,m,d)$ will attain its maximum at $x = \frac{1}{d}$. This implies that, in the limit of strong depolarization and moderate local dimension, the optimal probe is any pure product state, up to local unitaries.\\

\begin{figure}
\includegraphics[scale=0.26]{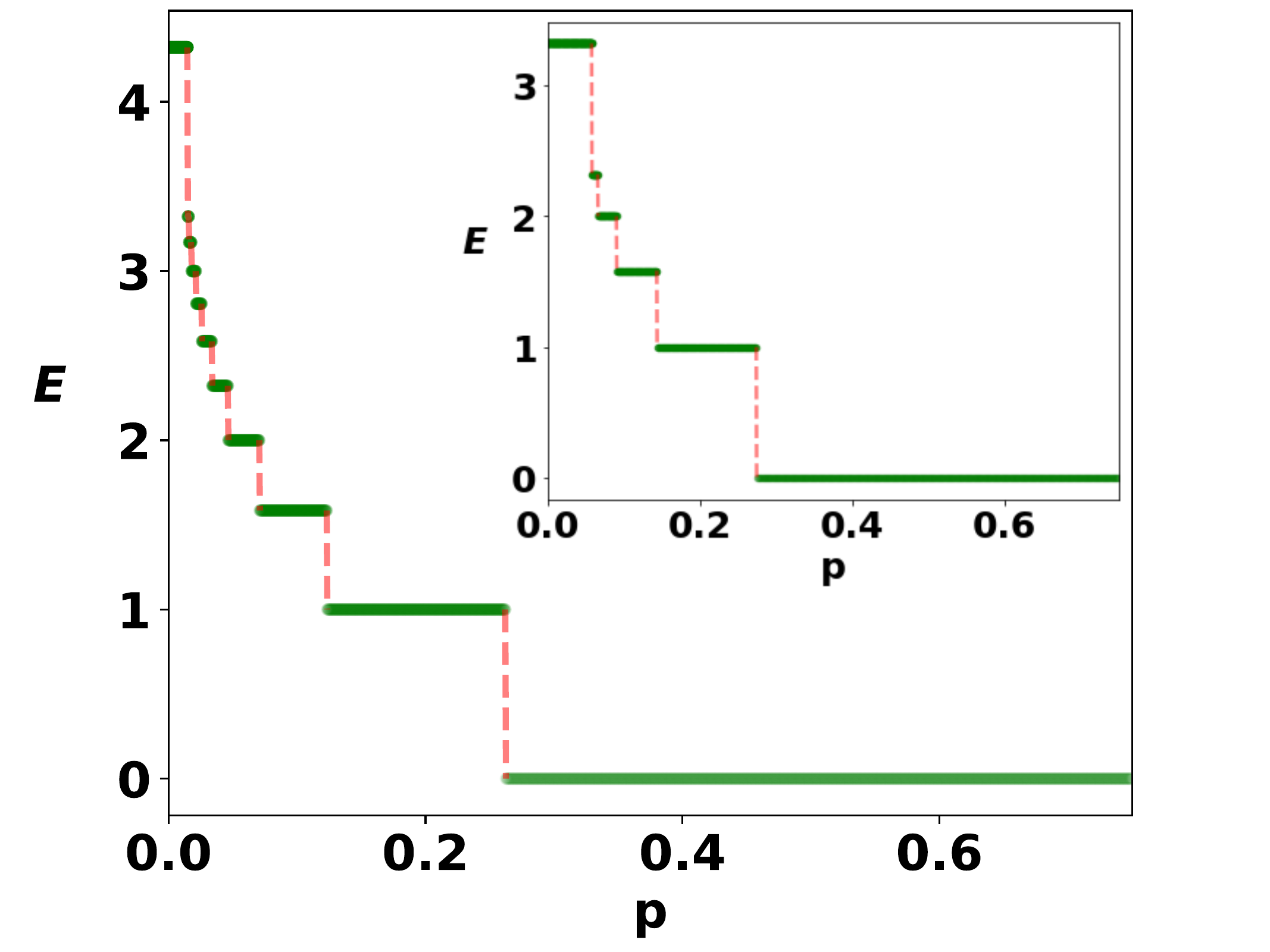}
\caption{\textbf{Behavior of optimal probe's entanglement in noise-parameter estimation of two-qudit local depolarizing channel.} We plot the entanglement of the optimal probe, denoted by $E$, as a function of the depolarizing noise strength in the estimation of the noise parameter $(p)$ of a single-qudit depolarizing channel using two-qudit probes. The main figure and its corresponding inset show results for local Hilbert space dimensions $d = 20$ and $d = 10$, respectively. The horizontal axis represents the depolarizing strength, while the vertical axis shows the entanglement of the optimal probe, quantified by the von Neumann entropy of its reduced density matrix. We observe a staircase-like structure in the entanglement of the optimal probe with respect to the depolarizing strength. The vertical axis is measured in ebits, while the horizontal axis is dimensionless.}
\label{fig:depolarising}
\end{figure}

\noindent \textbf{Case III (Moderate depolarization and large dimension):} Here, we investigate the entanglement characteristics of the optimal probe when the local depolarizing channel acts on arbitrarily large-dimensional Hilbert space, and the parameter of interest, $p$, can take any value in the range $[0,3/4]$ in this considered estimation scenario. 
In this case, we have
\begin{align} 
\label{eq: Fq derivative}
    &\frac{d\mathbb{F}_{\text{Q}}(p,xd,d)}{dx} 
  \notag \\ &= \frac{8}{p} \bigg[
        \frac{(3-4p)}{(3-4p+2px)^{2}} 
        - \frac{3}{(3-4p+4px)^{2}} \notag \\ 
    &\quad - \frac{4p^{2}}{d^{2}(3-4p)(3-4p+2px)^{2}} \notag \\ 
    &\quad + \frac{\{d^{2}(3-4p)^{2}-16p^{2}\}^{2}}
        {(3-4p)\{8pd(3-4p+2px)+xd^{3}(3-4p)^{2}\}^{2}}
    \bigg].
\end{align}
Note that both functions, $\frac{(3-4p)}{(3-4p+2px)^{2}}$ and $\frac{3}{(3-4p+4px)^{2}}$, fall monotonically at the same power with $x$ over the range $\frac{1}{d} \leq x \leq 1$, and the inequality $\frac{(3-4p)}{(3-4p+2px)^{2}} < \frac{3}{(3-4p+4px)^{2}}$ holds at both $x = 0$ and $x = 1$, and thus throughout this interval.
Furthermore, when $d$ is very large, the third term in Eq.~\eqref{eq: Fq derivative} becomes insignificant across the entire domain, and the contribution of the last term is negligible everywhere except at $x = 0$, compared to even the smallest value of the difference 
$\big(\frac{3}{(3-4p+4px)^{2}} - \frac{(3-4p)}{(3-4p+2px)^{2}} \big)$ over the domain.
Therefore, $\mathbb{F}_{\text{Q}}(p,m,d)$ decreases with increasing $m$ for all values of $p$, implying that, in this special case, pure product probes (i.e., $x = \frac{1}{d}$), up to local unitaries, serve as the optimal probes.\\

\noindent \textbf{Case IV (Moderate depolarization and arbitrary finite dimension):} Here, we discuss a more general case in the noise-parameter estimation of a two-qudit local depolarizing channel with the same local dimension: the depolarization strength lies in the intermediate regime, and the local Hilbert space dimension of probes is arbitrary but finite.

We observe a staircase-like entanglement structure in the optimal probe as a function of the parameter of interest, \( p \), as shown in Fig.~\ref{fig:depolarising}. 
As \( p \) decreases from \( \frac{3}{4} \), the Schmidt rank \( m \) of the optimal probe increases sequentially - one rank at a time - from \( m = 1 \) up to a critical value \( m = m^* \). I.e., the entanglement of the optimal probe grows \textit{step by step} as \( p \) decreases from \( \frac{3}{4} \), transitioning from product states to increasingly entangled states. At a specific, nontrivial value of \( p \), a sudden transition occurs: the optimal probe abruptly shifts from a non-maximally entangled state with \( m = m^* \) to a maximally entangled one (i.e., \( m = d \)) - jumping several Schmidt ranks at once. In other words, as \( p \) decreases from \( 3/4 \) to a critical value, the entanglement of the optimal probe increases step by step from \( m = 1 \) to \( m = m^* \); beyond this point, it abruptly jumps from \( m = m^* \) to \( m = d \). The corresponding \( m^* \) depends on the local Hilbert space dimension and is given by \( m^* = d/2 \) for even \( d \), and \( m^* = (d \pm 1)/2 \) for odd \( d \). Moreover, the value of \( p \) at which the entanglement of the optimal probe increases stepwise from \( m = 1 \) to \( m = 2 \) depends on the local Hilbert space dimension \( d \) and shifts toward zero depolarization as \( d \) increases.
This indicates that maximally entangled two-qudit probes do not always provide optimal precision; instead, there exists a dimension-dependent noise-strength regime where any pure product probe serves as the optimal probe for estimating the noise parameter of a two-qudit local depolarizing channel.

\subsection{Multi-qubit probes}
\label{multiqubit}
Here, we extend the analysis of Subsection~\ref{two-qudit} to a more general setting: the estimation of the noise parameter for a single-qubit local depolarizing channel using multi-qubit probes. All results presented here are obtained numerically.

When the probes are restricted to the Hilbert space $\mathbb{C}^2 \otimes \mathbb{C}^2 \otimes \mathbb{C}^2$ in the noise-parameter estimation of single-qubit local depolarizing channel, we find that the GHZ state ($\ket{\text{GHZ}} \coloneqq \frac{1}{\sqrt{2}} (\ket{000} + \ket{111})$), up to local unitaries, serves as the optimal probe for estimating the noise parameter $p$ in the range $0 \leq p < 0.1943$. In the range $0.1943 \lesssim p < 0.2053$, the W state ($\ket{\text{W}} \coloneqq \frac{1}{\sqrt{3}} (\ket{001} + \ket{010} + \ket{100})$), up to local unitaries, provides the optimal precision, and for $0.2053 \lesssim p < 0.3169,$ the optimum probe is bi-product states (up to local unitaries).  
Here, bi-product states refer to three-qubit states in which any two parties share a maximally entangled state, while the third party, remaining completely disentangled from the others, is in a pure single-qubit state.
In the range $0.3169 \lesssim p \leq \frac{3}{4}$, the optimal probes are fully product states up to local unitaries. This finding is consistent with earlier results obtained for two-qubit probes~\cite{PhysRevA.63.042304}, where pure product states were also found to be optimal within the same range of the noise parameter $p$. A schematic description of the entanglement structure of optimal probes across the full range of $p$ - when three-qubit states are used as probes in this estimation scenario - is provided in Fig.~\ref{fig: 3qubit_states}. 
As in the bipartite case, we can see that the entanglement of the optimal probe also exhibits a staircase-like structure across ranges of depolarization strength.
But one can quantify the amount of multipartite entanglement in many more ways compared to the two-party scenario, with examples including the geometric measure of entanglement~\cite{SHIMONY_1995,Barnum_2001,Chieh_2003}, average of bipartite negativities~\cite{HORODECKI_1996,peres_prl_1996,zyczkowski_1998,park_2000,vidal_werner_2002,plenio_2005}, relative entropy of entanglement~\cite{vedral_prl_1997,vedral_rmp_2002}, and generalized geometric measure~\cite{GGM1,GGM2}. 
Depending on the choice of the quantifier (of the multiparty entanglement in the probe), the staircase may or may not be monotonic. But regardless of the choice of the measure, after $p\approx 0.3169$, the entanglement of the optimal probe is zero across the range $0.3169 \lesssim p \leq \frac{3}{4}$, as the state of the optimal probe is fully product.

Furthermore, in the case of four-qubit probes, we find that fully product input states, up to local unitaries, continue to serve as optimal probes within the same noise interval ($0.3169 \lesssim p \leq \frac{3}{4}$), reinforcing the trend observed in both the two- and three-qubit cases.

\begin{figure} 
\hspace{-2.75 mm}
\includegraphics[scale=0.128]{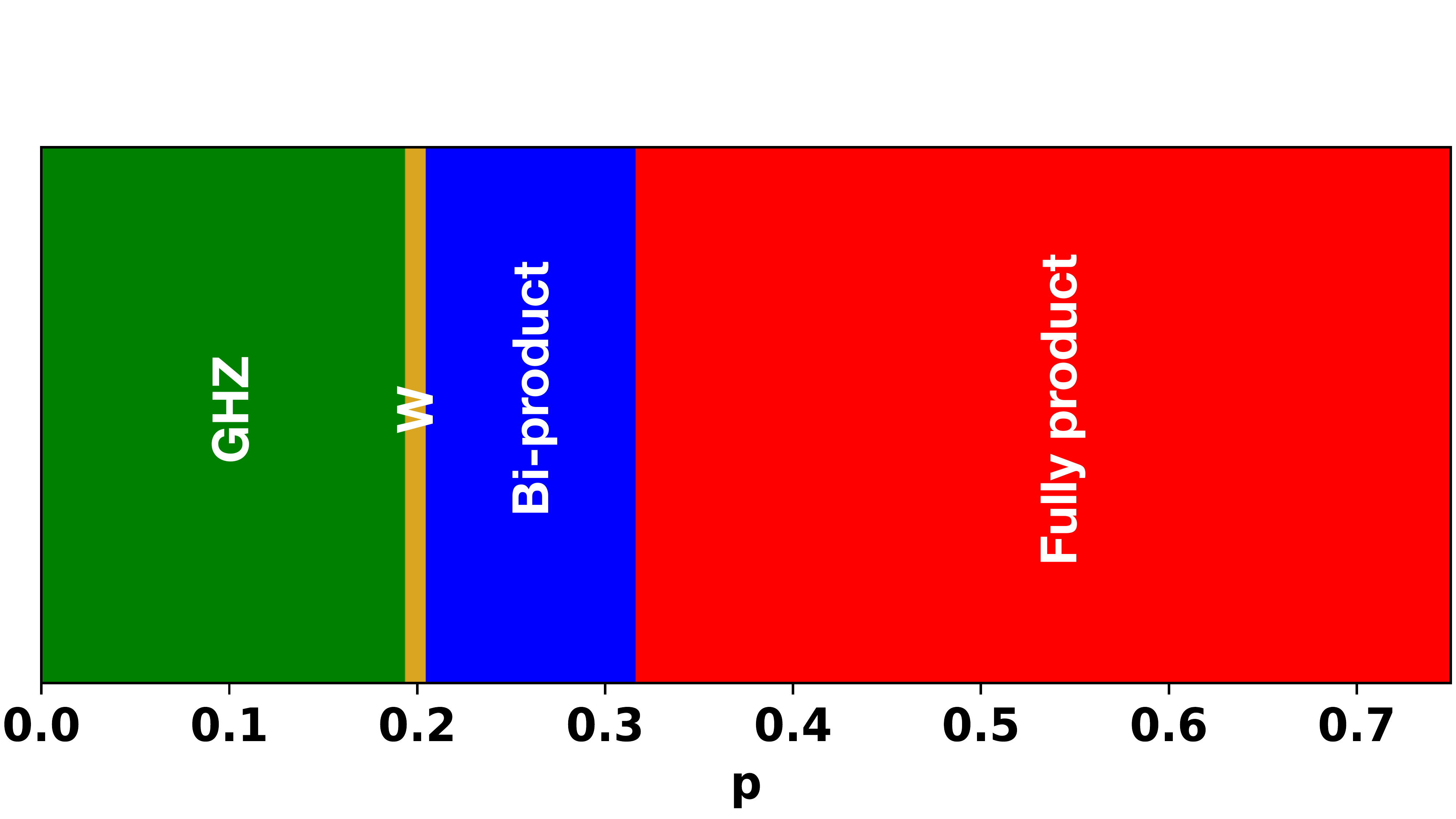
}
\caption{\textbf{Schematic presentation of optimal probe's entanglement characteristics in estimating depolarizing noise strength using three-qubit probes.} Here, we present a schematic diagram of the entanglement of the optimal probe in the case of estimating the noise strength of a local three-qubit depolarizing channel, with respect to the depolarizing noise strength.
Along the horizontal axis, the value of the dimensionless depolarizing strength, $p$, increases. Different colors represent different types of three-qubit entangled states: green, golden, blue, and red indicate the three-qubit GHZ state, three-qubit W state, three-qubit bi-product state, and fully product three-qubit state, respectively.
The GHZ state, up to local unitaries, and the W state, also up to local unitaries, serve as the optimal probe for estimating the parameter $p$ in the ranges $0 \lesssim p < 0.1943$ and $0.1943 \lesssim p < 0.2053$, respectively.
For $0.2053 \lesssim p < 0.3169$, the optimal probe is a three-qubit state in which any two parties share a maximally entangled state, while the third party - completely disentangled from the others - is in a pure single-qubit state. In the remaining range of $p$ ($0.3169 \lesssim p \leq \frac{3}{4}$), the optimal probe is any fully product probe.
    }
    \label{fig: 3qubit_states}
\end{figure}

\subsection{Multi-qudit probes}
\label{multiqudit}
In this subsection, we investigate the most general parallel noise-parameter estimation scenario for an arbitrary-dimensional single-qudit depolarizing channel, where the probes are multipartite systems.
\begin{theorem}
In the estimation of the depolarizing noise parameter of an arbitrary single-qudit depolarizing channel, within a parallel estimation scheme using multi-partite probes with an arbitrary number of parties, any product state (up to local unitaries) serves as an optimal probe for sufficiently high noise strength.
\end{theorem}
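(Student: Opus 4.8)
The plan is to parametrize the noise by $\epsilon \coloneqq 1 - \tfrac{4p}{3}$, so that ``sufficiently high noise strength'' means $\epsilon \to 0$, and to analyze the QFI of the encoded state perturbatively about this limit. First I would invoke the convexity of the QFI to restrict to pure probes $\ket{\psi_0}$, and the invariance of the QFI under parameter-independent unitaries together with Eq.~\eqref{eq: effect of local unitary on depolarising channel} to work in a convenient local basis (this is also what lets the final answer be stated ``up to local unitaries''). Writing the single-qudit channel as $\Lambda_p = \epsilon\,\id + (1-\epsilon)\mathcal{D}$, where $\mathcal{D}(\cdot) = \Tr(\cdot)\,\mathbb{I}_d/d$ is the completely depolarizing map, the encoded state expands as a sum over subsets $S \subseteq \{1,\dots,n\}$,
\begin{equation*}
\rho_p = \sum_{S} \epsilon^{|S|}(1-\epsilon)^{n-|S|}\, \rho_0^{S}\otimes \frac{\mathbb{I}_{\bar S}}{d^{\,n-|S|}},
\end{equation*}
where $\rho_0^{S}$ is the marginal of $\ket{\psi_0}$ on the parties in $S$. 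In particular $\rho_p \to \mathbb{I}_{d^{n}}/d^{n}$ as $\epsilon \to 0$, so the encoded state sits near the maximally mixed state for high noise.

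Next I would compute the QFI in this near-maximally-mixed regime. The key object is $T \coloneqq \partial_\epsilon \rho_p\big|_{\epsilon = 0}$, which from the expansion collapses to a sum of single-party contributions,
\begin{equation*}
T = \sum_{k=1}^{n}\Big(\rho_0^{(k)} - \frac{\mathbb{I}_d}{d}\Big)\otimes \frac{\mathbb{I}_{\bar k}}{d^{\,n-1}},
\end{equation*}
with $\rho_0^{(k)}$ the reduced state of party $k$. Using the eigenbasis expansion of the QFI about the maximally mixed state gives $\mathbb{F}_\text{Q} = d^{n}\,\Tr(T^2) + O(\epsilon)$ for the QFI with respect to $\epsilon$; converting to $p$ multiplies by $(d\epsilon/dp)^2 = \tfrac{16}{9}$. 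Crucially, the cross terms in $\Tr(T^2)$ vanish, since each $k \neq l$ term carries a factor $\Tr(\rho_0^{(k)} - \mathbb{I}_d/d) = 0$, leaving only the diagonal contributions. This yields the leading-order QFI
\begin{equation*}
\mathbb{F}_\text{Q} \;=\; \frac{16}{9}\Big(d\sum_{k=1}^{n} P_k - n\Big) + O(\epsilon), \qquad P_k \coloneqq \Tr\big[(\rho_0^{(k)})^2\big],
\end{equation*}
which depends on the probe only through the single-party purities $P_k$. I would cross-check this against the two-qudit result of Subsection~\ref{two-qudit}: for the rank-$m$ probe $P_1 = P_2 = 1/m$, which reproduces $\tfrac{32}{9}(1/x - 1)$ exactly.

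Since $P_k \le 1$ with equality if and only if party $k$ is pure, the leading term is maximized precisely by the fully product states, attaining the value $\tfrac{16}{9}\,n(d-1)$; for any entangled probe at least one $P_k < 1$ and the leading QFI is strictly smaller. This identifies product states as the unique maximizers to leading order, consistent with the exact additivity $\mathbb{F}_\text{Q}(\bigotimes_k \sigma_k) = \sum_k \mathbb{F}_\text{Q}(\sigma_k) = n\,\mathbb{F}_\text{Q}^{\mathrm{single}}(p)$ enjoyed by product probes. Note that product probes are trivially continuously commutative for every $p$, so they always pass the necessary criterion of Theorem~\ref{theorem-necessary-alpha-cp}; because that criterion is only necessary, concluding optimality still requires the quantitative comparison above, namely that for $\epsilon$ small enough the leading term dominates the competition for \emph{every} probe.

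The main obstacle is exactly this last uniformity step. For probes bounded away from product -- with $\sum_k P_k \le n - \eta$ for some fixed $\eta > 0$ -- the leading-order deficit is of order $d\eta$, which for small $\epsilon$ dwarfs the uniformly bounded $O(\epsilon)$ corrections, so such probes cannot beat the product state. The delicate region is that of weakly entangled probes, where both the entanglement penalty in the leading term and the $\epsilon$-linear correction vanish as the probe approaches the product manifold. Here I would expand the QFI jointly in $\epsilon$ and in a transverse entanglement parameter $\delta$ measuring the departure from the product manifold: the leading-order penalty $-c\,\delta$ and the $\epsilon$-linear correction $+\epsilon\,c'\,\delta$ enter at the same order, so the penalty dominates once $\epsilon < c/c'$. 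A compactness argument on the manifold of pure states, combining this local transverse-stability estimate near the product states with the uniform global gap away from them, then furnishes a single threshold $\epsilon^\ast$ (equivalently a noise strength $p^\ast < 3/4$) above which product states are optimal.
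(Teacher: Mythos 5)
Your proposal is correct in substance and reaches the paper's conclusion, but by a genuinely different mechanism than the paper's own proof. Both arguments start from the same expansion of $\Lambda_p^{\otimes n}(\rho_0)$ in $\epsilon = 1-\tfrac{4p}{3}$ and the observation that, to first order in $\epsilon$, only the single-party marginals of the probe survive. From there, however, the paper never computes a QFI: it rewrites the first-order encoded state as the uniform mixture $\tfrac{1}{n}\sum_{r}\sum_{j} p^{r}_{j}\,\tilde{\sigma}^{r}_{j}$, where $\{p^{r}_{j}\}$ is the spectrum of the $r$-th marginal, and then invokes convexity of the QFI: a convex function over the polytope of such mixtures is maximized at an extreme point, i.e.\ when every spectrum is a point mass, forcing each marginal to be pure and hence the probe to be fully product. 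You instead evaluate the leading order explicitly, using the quadratic form of the QFI at the maximally mixed state, $\mathbb{F}_\text{Q} = d^{n}\Tr(T^{2}) + O(\epsilon)$, which yields the closed form $\tfrac{16}{9}\bigl(d\sum_{k}P_{k} - n\bigr)$ in terms of single-party purities. Your route buys three things the paper's does not: the explicit optimal value $\tfrac{16}{9}n(d-1)$, the strict (leading-order) suboptimality of every probe with a mixed marginal, and the consistency check against the two-qudit high-noise formula $\tfrac{32}{9}(1/x-1)$ of Section~\ref{two-qudit}. It also makes visible the step that both arguments leave incomplete: uniformity of ``sufficiently high noise'' over probes. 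The paper's convexity argument is likewise applied only to the first-order truncation at fixed probe, so it carries the same gap silently; you at least name it. Be aware, though, that your closing sketch is heuristic as written: in a Hilbert-space parametrization, the purity deficit of a nearly product probe is quadratic in the transverse displacement $\delta$, not linear, so the real competition is between a $-c\,\delta^{2}$ penalty and a potential $\epsilon\, c'\delta$ cross term; closing the argument requires showing that product states remain critical points of $\mathbb{F}_\text{Q}$ at every small $\epsilon$ (e.g.\ by a local-unitary symmetry that reverses each transverse direction), which your proposal does not establish. Since the paper itself offers nothing on this point, your treatment is still the more careful of the two.
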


\begin{proof}
Let us consider a $n$-partite quantum state, $\rho_0 :(\mathbb{C}^{d})^{\otimes n} \to (\mathbb{C}^{d})^{\otimes n}$, where $n$, and $d$ both can be arbitrary. 
The action of local single-qudit depolarizing channel $(\Lambda_p)$ with highly depolarizing noise strength $p \to \frac{3}{4}$ and local dimension $d$ on the state $\rho_0$ can be written as
\begin{align*}
&\Lambda^{\otimes n}_{p}(\rho_{0})\\&=\left(\frac{4p}{3}\right)^{n}\left(\frac{\mathbb{I}_d}{d}\right)^{\otimes n}+\left(1-\frac{4p}{3}\right)\left(\frac{4p}{3}\right)^{n-1}\sum_{i=1}^{n}R^1_{i}\\&+ \left(1-\frac{4p}{3}\right)^2\left(\frac{4p}{3}\right)^{n-2}\sum_{i=1}^{\binom{n}{2}}R^2_{i} + \ldots + \left(1-\frac{4p}{3}\right)^n \rho_0.
\end{align*} 
Here, $R_{1}^1=\rho_{1}\otimes \left(\frac{\mathbb{I}_d}{d}\right)^{\otimes (n-1)}, \quad R_{2}^1=\frac{\mathbb{I}_d}{d}\otimes \rho_{2}\otimes \left(\frac{\mathbb{I}_d}{d}\right)^{\otimes (n-2)}, \ldots, R_n^1 = \left(\frac{\mathbb{I}_d}{d}\right)^{\otimes (n-1)}\otimes \rho_n $, and $R^2_1 = \rho_{12} \otimes \left(\frac{\mathbb{I}_d}{d}\right)^{\otimes (n-2)}, \cdots$. 
Here $\rho_{i}$ denote the state obtained
after the partial tracing out of all the parties from  $\rho_{0}$ except $i$-th party and $\{\rho_{ij}\}$ denote the states obtained
after the partial tracing out of all the parties from  $\rho_{0}$ except $i$-th and $j$-th parties with $i,j \in \{1,2, \ldots, n\}$.

Let us suppose that the state $\rho_{i}$ with $i \in \{1,2, \ldots, n\}$ can be written as $\rho_{i}=\sum_{j}p^{i}_{j}\ketbra{\psi^{i}_{j}}{\psi^{i}_{j}}$, and let us call $(1-4p/3)$ as $\epsilon$. 
In the limit $\epsilon \to 0$, which is the high noise limit, we can simplify $\Lambda^{\otimes n}_{p}$ as follows:
\begin{align*}
\Lambda^{\otimes n}_{p \to \frac{3}{4}}(\rho_{0})&\approx\left(1-n\epsilon\right)\left(\frac{\mathbb{I}_d}{d}\right)^{\otimes n}+\epsilon \sum_{i=1}^{n}R^1_{i}\\
&=\frac{\sum_{i=1}^{n}\Big[\left(1-n\epsilon\right)\left(\frac{\mathbb{I}_d}{d}\right)^{\otimes n}+n\epsilon R^1_{i}\Big]}{n}\\ 
&=\frac{\sum_{r=1}^{n}\left(\sum_{j}p^{r}_{j}\tilde{\sigma^{r}_{j}}\right)}{n}.
\end{align*}
Here, $\Tilde{\sigma^{r}_{j}}$ is $\left(\frac{\mathbb{I}_d}{d}\right)^{\otimes n}$ but the $\left(\frac{\mathbb{I}_d}{d}\right)$ of the $r$-th party is replaced by $\sigma^{r}_{j}$, where $\sigma^{r}_{j} \coloneqq (1-n\epsilon)\frac{\mathbb{I}_d}{d}+n\epsilon\ketbra{\psi^{i}_{j}}{\psi^{i}_{j}}$. 
Now, from the convexity property of QFI discussed in Section~\ref{sec2}, one can see that the maximum value of QFI occurs when, for each \( r \), \( p^{r}_{j} = 1 \) for only one value of \( j \), and zero for all others. This implies that each \( \rho_i \) must be pure to attain the maximum QFI, which happens only when the input state is fully product.

This completes the proof.
\end{proof}

\begin{figure*}
\includegraphics[scale=0.369]{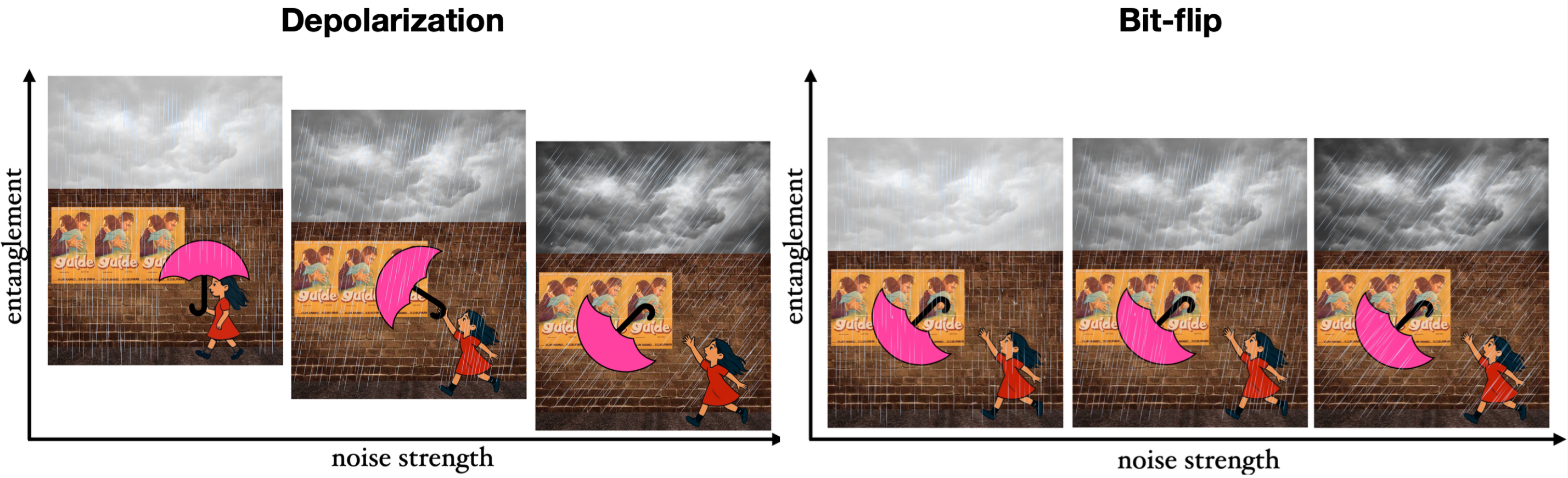}
\caption{\textbf{Depiction of results: noise estimation of depolarization vs. bit-flip channel.}
The left panel describes the necessity of entanglement for achieving optimal precision in noise-parameter estimation of a local depolarizing channel, while the right panel does the same for a local bit-flip channel. Each panel is divided into three segments, representing different regimes of noise strength, increasing from left to right. Within each segment, the entanglement properties of the optimal probe are conveyed through a real-life metaphor: a girl walking in the rain while holding an umbrella.
In this metaphor, the intensity of the thunderstorm, which grows from left to right in each panel, corresponds to increasing noise strength, and the use of the umbrella symbolizes the necessity of entanglement in the input probe to attain optimal precision.
In the depolarizing channel case (left panel), entanglement becomes progressively less relevant as noise strength increases. Accordingly, as the storm intensifies, the umbrella becomes increasingly cumbersome, reflecting that entanglement is no longer beneficial in highly noisy regimes.
In contrast, for the bit-flip channel (right panel), the girl is never shown holding an umbrella, regardless of the storm’s intensity. This signifies that entanglement is never required - whether the bit-flip noise is weak or strong, a product probe suffices for optimal estimation.
It is worth noting that, although not depicted in the presentation, the optimal precision decreases with increasing noise strength in both cases, regardless of the probe used.
}
\end{figure*}

\section{Entanglement of optimal probe in noise-parameter estimation of bit-flip channel}
\label{sec5}
This section analyzes the entanglement properties of the optimal probe in noise-parameter estimation for a two-qubit local bit-flip channel under the parallel estimation scheme.

\begin{theorem}
Let us consider the estimation of the noise parameter of a two-qubit local bit-flip channel under a parallel estimation scheme, where the probes are restricted to two-qubit states. Then, the optimal probe state is given by
\begin{equation}
\label{eq: two qb bf state}
\ket{\psi}^{\text{opt}}_{\text{bf}} (\theta) \coloneqq \frac{\ket{++}+\ket{+-}+\ket{-+}+e^{i\theta} \ket{–}}{2}; \quad \theta \in [0, 2\pi),
\end{equation}
and the maximum QFI corresponding to the optimal probes is given by $\mathbb{F}_{\text{Q}}^{\text{max}} = 2/p(1-p)$.
\end{theorem}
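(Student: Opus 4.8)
The plan is to follow the route used in the depolarizing analysis: invoke the necessary criterion of Theorem~\ref{theorem-necessary-alpha-cp} to fix the form of the candidate optimal probe, evaluate the QFI on it, and then certify that the value obtained is a global maximum. By convexity of the QFI it is enough to optimize over pure two-qubit probes. I would work throughout in the eigenbasis $\{\ket{+},\ket{-}\}$ of $\sigma^x$, where the Kraus operator $\sigma^x$ is diagonal, so that $\beta_p^{\otimes 2}$ acts as a correlated dephasing: in the product $\pm$ basis it multiplies the $(ab),(cd)$ matrix element of $\rho_0$ by $(1-2p)^{r}$, with $r$ the number of parties in which the two basis strings differ. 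The underlying $p$-independent operators are $M_i\in\{\mathbb{I}\otimes\mathbb{I},\,\mathbb{I}\otimes\sigma^x,\,\sigma^x\otimes\mathbb{I},\,\sigma^x\otimes\sigma^x\}$, carrying weights $f_i(p)\in\{(1-p)^2,\,p(1-p),\,p(1-p),\,p^2\}$.

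Imposing continuous commutativity $[\rho_p,\rho_{p+dp}]=0$ on a general pure probe restricts the search to states whose encoded family $\{\rho_p\}$ shares a single $p$-independent eigenbasis. The decisive structure of the candidate in the statement is that its four flip-branches $M_i\ket{\psi_0}$ are mutually orthogonal, equivalently $\langle\sigma^x\otimes\mathbb{I}\rangle=\langle\mathbb{I}\otimes\sigma^x\rangle=\langle\sigma^x\otimes\sigma^x\rangle=0$. When this holds, $\rho_p=\sum_i f_i(p)\,M_i\ket{\psi_0}\!\bra{\psi_0}M_i^{\dagger}$ is already in spectral form, with $p$-independent eigenvectors $M_i\ket{\psi_0}$ and eigenvalues equal to the weights $\{(1-p)^2,\,p(1-p),\,p(1-p),\,p^2\}$; a short computation shows these orthogonality conditions hold for $\ket{\psi}^{\text{opt}}_{\text{bf}}(\theta)$ for every $\theta$, since they depend only on the moduli of the amplitudes, so the eigenvalues, and hence the precision, are $\theta$-independent.

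Since the encoded states then commute at neighbouring $p$, the SLD is diagonal in this fixed basis and the QFI collapses to the classical Fisher information of the eigenvalue distribution, $\mathbb{F}_{\text{Q}}=\sum_i \dot{f}_i^{2}/f_i$. Summing the four contributions gives $4+4+2(1-2p)^2/[p(1-p)]$, which simplifies to $2/[p(1-p)]$, matching the claimed $\mathbb{F}_{\text{Q}}^{\text{max}}$. Because $\theta=0$ reduces $\ket{\psi}^{\text{opt}}_{\text{bf}}(0)$ to a product state (it is $\ket{00}$ up to local unitaries) while $\theta=\pi$ yields a maximally entangled state, this already shows that entanglement is neither necessary nor forbidden for optimality.

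To upgrade ``attains $2/[p(1-p)]$'' to ``is optimal'', I would establish a matching upper bound by dilating the channel to its environment. The global pure state $\ket{\Psi_p}=\sum_i\sqrt{f_i(p)}\,M_i\ket{\psi_0}\otimes\ket{i}_E$, with orthonormal pointers $\ket{i}_E$, has pure-state QFI equal to the classical Fisher information of $\{f_i(p)\}$, namely $2/[p(1-p)]$, for \emph{every} input, since the $M_i$ are $p$-independent and unitary and the environment cross terms drop out. As $\rho_p$ is the partial trace of $\ket{\Psi_p}\!\bra{\Psi_p}$ over a $p$-independent environment, monotonicity of the QFI under CPTP maps gives $\mathbb{F}_{\text{Q}}(\rho_p)\le 2/[p(1-p)]$ for every two-qubit probe, with equality exactly when the flip-branches are orthogonal. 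The step I expect to be the main obstacle is precisely this bridge from necessity to optimality: Theorem~\ref{theorem-necessary-alpha-cp} only constrains the eigenbasis, so on its own it can single out a stationary probe but not prove global optimality. The dilation bound is what certifies the maximum and simultaneously pins the saturating states to $\ket{\psi}^{\text{opt}}_{\text{bf}}(\theta)$, thereby exhibiting the product probe $\theta=0$ as sufficient.
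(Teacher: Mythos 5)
Your proposal is correct, and its key step is genuinely different from the paper's. The achievability half coincides: you note that for $\ket{\psi}^{\text{opt}}_{\text{bf}}(\theta)$ the four flip branches $M_i\ket{\psi_0}$ are mutually orthogonal (equivalently $\langle\sigma^x\otimes\mathbb{I}_2\rangle=\langle\mathbb{I}_2\otimes\sigma^x\rangle=\langle\sigma^x\otimes\sigma^x\rangle=0$), so $\rho_p$ is diagonal in a $p$-independent basis with eigenvalues $\{(1-p)^2,p(1-p),p(1-p),p^2\}$ and $\mathbb{F}_{\text{Q}}=\sum_i\dot f_i^2/f_i=2/[p(1-p)]$; this is exactly the paper's sub-case (II-c). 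Where you diverge is the optimality certificate. The paper treats Theorem~\ref{theorem-necessary-alpha-cp} as a necessary criterion, enumerates all configurations in which $\rho_0,\rho_{12},\rho_3$ mutually commute (cases I-a, I-b, II-a through II-d), computes the QFI in each, and takes the maximum; its conclusion therefore rests on the exhaustiveness of that case list and on the necessity direction of the commutativity criterion. You instead prove a probe-independent upper bound: the purification $\ket{\Psi_p}=\sum_i\sqrt{f_i(p)}\,M_i\ket{\psi_0}\otimes\ket{i}_E$ has pure-state QFI $\sum_i\dot f_i^2/f_i=2/[p(1-p)]$ for \emph{every} input (the cross term vanishes because $\sum_i\dot f_i=0$ and the $M_i$ are unitary), and monotonicity of the QFI under the $p$-independent partial trace gives $\mathbb{F}_{\text{Q}}(\rho_p)\le 2/[p(1-p)]$ for all probes, mixed ones included. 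This is arguably tighter logic than the paper's: it does not depend on commutativity being necessary for optimality (a delicate point, since commutativity is certainly not sufficient --- the paper's own case I-a commutes yet has zero QFI), nor on the completeness of a finite case analysis, and it generalizes immediately to any vector encoding with unitary $M_i$. What each approach buys: the paper's enumeration also produces the QFI of the suboptimal commuting configurations and derives the explicit form \eqref{eq: two qb bf state} from orthogonality plus normalization; your dilation argument gives a clean global maximum but, as you yourself flag, the claim that equality holds \emph{exactly} when the flip branches are orthogonal is asserted rather than proven --- equality conditions for QFI monotonicity under partial trace are nontrivial. To close that last gap you could simply append the paper's final computation: mutual orthogonality of the branches forces $|a|=|b|=|c|=|d|=1/2$ in the $\pm$ basis, and local phase rotations reduce any such state to the one-parameter family \eqref{eq: two qb bf state}. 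As written, your proof fully establishes the value $\mathbb{F}_{\text{Q}}^{\text{max}}=2/[p(1-p)]$ and that the stated family attains it; only the uniqueness of that family needs the extra step.
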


\begin{remark}
The entanglement of this optimal probe state, quantified by the von Neumann entropy of the reduced density matrix, is given by
\(E\left(\ket{\psi}^{\text{opt}}_{\text{bf}} (\theta)\right)= 
H(\cos^2(\theta/4))\)
which varies in the range $0 \leq E\left(\ket{\psi}^{\text{opt}}_{\text{bf}} (\theta)\right) \leq 1$.
This implies that even a product input state, $\ket{\psi}^{\text{opt}}_{\text{bf}} (\theta=0)$,
can serve as an optimal probe for this estimation task.
\end{remark}

\begin{proof}
When each qubit of any pure two-qubit state, $\rho_0 (\coloneqq \ket{\psi_0}\bra{\psi_0})$, undergoes a two-qubit local bit-flip channel with noise strength \( p \in [0,1] \), the resulting state can be written as
\begin{align*}
\beta_p^{\otimes 2}(\rho_0) = (1 - p)^2 \rho_0 + 2p(1 - p)\, \rho_{12} + p^2 \rho_3,
\end{align*}
where $\rho_1 \coloneqq \ket{\psi_1}\bra{\psi_1}, \hspace{2 mm} \rho_2 \coloneqq \ket{\psi_2}\bra{\psi_2}, \hspace{2 mm} \rho_3 \coloneqq \ket{\psi_3}\bra{\psi_3}, \hspace{2 mm} \ket{\psi_1} \coloneqq \sigma^x \otimes \mathbb{I}_2 \ket{\psi_0}, \hspace{2 mm}
\ket{\psi_2} \coloneqq \mathbb{I}_2 \otimes \sigma^x \ket{\psi_0}$,  $\ket{\psi_3} \coloneqq \sigma^x \otimes \sigma^x \ket{\psi_0}$, and $\rho_{12}=(\rho_1 + \rho_2)/2$.
Any pure two-qubit state can be expressed as
\begin{equation*}
    \ket{\psi_0} = a\ket{++} + b \ket{+-} + c\ket{-+} + d\ket{--},
\end{equation*}
with \( |a|^2 + |b|^2 + |c|^2 + |d|^2 = 1 \).  
The states $\ket{\psi_1}, \ket{\psi_2}$, and $\ket{\psi_3}$ will have the following forms in the basis of $\{\ket{++}, \ket{+-}, \ket{-+}, \ket{--}\}$:
\begin{align*}
    \ket{\psi_1} &= a\ket{++} + b \ket{+-} - c\ket{-+} - d\ket{--}, \\
    \ket{\psi_2} &= a\ket{++} - b \ket{+-} + c\ket{-+} - d\ket{--},\\
    \ket{\psi_3} &= a\ket{++} - b \ket{+-} - c\ket{-+} + d\ket{--}.
\end{align*}
According to Theorem~\ref{theorem-necessary-alpha-cp}, the states \( \rho_0 \), \( \rho_{12} \), and \( \rho_3 \) must commute with each other for $\rho_0$ to be the optimal probe.  
Moreover, the commutation relation \( [\rho_0, \rho_3] = 0 \) implies \( [\rho_1, \rho_2] = 0 \).  
This, in turn, means that the pure states \( \rho_1 \) and \( \rho_2 \) are either the same or orthogonal.  
Accordingly, we now consider two separate cases: (I) \( \rho_1 = \rho_2 \), and (II) \( \rho_1 \) and \( \rho_2 \) are orthogonal.

We begin by considering the case where \( \rho_1 = \rho_2 \).  
In this situation, Theorem~\ref{theorem-necessary-alpha-cp} reduces to the following set of commutation relations:
$[\rho_0, \rho_1] = 0, [\rho_0, \rho_3] = 0$, and $[\rho_1, \rho_3] = 0$.
These imply that the three states \( \rho_0 \), \( \rho_1 \), and \( \rho_3 \) are mutually commuting. But $\rho_1=\rho_2$ implies $\rho_0=\rho_3$. 
Under these constraints, two distinct sub-cases can arise, which we analyze in detail below:
\begin{itemize}
    \item \textit{(I-a) All four states $\rho_0$, $\rho_1$, $\rho_2$, and $\rho_3$ are identical}:  

Then, \( \rho_0 = \rho_1 = \rho_2 = \rho_3 = \ket{++}\bra{++} \), and the channel output remains unchanged:  
\( \beta_p^{\otimes 2}(\rho_0) = \ket{++}\bra{++} \).  
Hence, the QFI corresponding to the parameter of interest \( p \) is \( \mathbb{F}_{\text{Q}}^1 = 0 \).

\item \textit{(I-b) The states $\rho_0$ or $\rho_3$ is orthogonal to $\rho_1$ or $\rho_2$}:  
Here, the eigenvalues of \( \beta_p^{\otimes 2}(\rho_0) \) are \( \{(1-p)^2 + p^2, 0, 2p(1-p), 0\} \),  
and the QFI is given by  
$\mathbb{F}_{\text{Q}}^2 = \frac{4(1-2p)^2}{p^2+(1- p)^2}+\frac{2(1-2p)^2}{p(1-p)}$.

\end{itemize}

Next, we consider the second case, where \( \rho_1 \) and \( \rho_2 \) are orthogonal. But this can not be physically true when $\rho_0=\rho_3$. So, 
given the commutation relations \( [\rho_0, \rho_1] = 0 \), \( [\rho_0, \rho_3] = 0 \), and \( [\rho_1, \rho_3] = 0 \), without the loss of generality, the following sub-cases can arise in this case (with all states in below are expressed in the same basis):

\begin{itemize}

    \item \textit{(II-a) \( \rho_0 = \emph{diag}\{1, 0, 0, 0\} \), \( \rho_3 = \emph{diag}\{0, 1, 0, 0\} \), and \( \rho_{12} = \emph{diag}\{\frac{1}{2}, \frac{1}{2}, 0, 0\} \):} 
    The QFI in this case is $\mathbb{F}_{\text{Q}}^3 = \frac{1}{p(1 - p)}$.

    \item \textit{(II-b) \( \rho_0 = \emph{diag}\{1, 0, 0, 0\} \), \( \rho_3 = \emph{diag}\{0, 1, 0, 0\} \), and \( \rho_{12} = \emph{diag}\{0, \frac{1}{2}, \frac{1}{2}, 0\} \):}
    The quantum Fisher information in this case is  $\mathbb{F}_{\text{Q}}^4 = \frac{1 + p}{p(1 - p)}$.

    \item \textit{(II-c) \( \rho_0 = \emph{diag}\{1, 0, 0, 0\} \), \( \rho_3 = \emph{diag}\{0, 1, 0, 0\} \), and \( \rho_{12} = \emph{diag}\{0, 0, \frac{1}{2}, \frac{1}{2}\} \):}
    The QFI corresponding to $p$ in this sub-case is  $\mathbb{F}_{\text{Q}}^5 = \frac{2}{p(1 - p)}$.

    \item \textit{(II-d) \( \rho_0 = \emph{diag}\{1, 0, 0, 0\} \), \( \rho_3 = \emph{diag}\{0, 1, 0, 0\} \), and \( \rho_{12} = \emph{diag}\{\frac{1}{2}, 0, \frac{1}{2}, 0\} \):} 
    The QFI in this case is given by $\mathbb{F}_{\text{Q}}^6 = \frac{2 - p}{p(1 - p)}$.
\end{itemize}

Note that, among all the sub-cases discussed in both scenarios, the maximum quantum Fisher information corresponding to the noise parameter of the two-qubit local bit-flip channel in this estimation task is given by $\mathbb{F}_{\text{Q}}^{\text{max}} = \frac{2}{p(1 - p)}$ across the entire range of $p$, i.e. $p\in[0,1]$.  
This maximum QFI is achieved when the states \( \rho_0, \rho_1, \rho_2 \), and \( \rho_3 \) are mutually orthogonal.  
Exploiting the orthogonality of these states, along with the unit-trace condition, one can see that the input state attaining this maximal QFI is the one given in Eq.~\eqref{eq: two qb bf state}.
Furthermore, the eigenvalues of the reduced density matrix of the state in Eq.~\eqref{eq: two qb bf state} are given by  
$\text{diag}\{\cos^2(\theta/4),\sin^2(\theta/4)\}$ with $\theta \in [0, 2\pi).$
Thus, it completes the proof.
\end{proof}

For any $\theta \in [0, 2\pi)$, the input state given in Eq.~\eqref{eq: two qb bf state} serves as an optimal probe for this estimation task. Since the entanglement of this state can range from zero to the maximal value depending on $\theta$, it follows that entanglement is not a necessary resource: even a specific product probe can achieve the best possible precision in estimating the noise parameter of any two-qubit local bit-flip channel under the parallel estimation scheme.

\section{Conclusion}
\label{sec6}
In quantum metrology, one of the central objectives is to determine the best attainable precision in estimating a parameter that characterizes a quantum process - be it a unitary operation or a more general quantum 
evolution. Achieving this optimal precision generally requires an optimization over the input probe state, the measurement strategy, and the classical estimator.

Noise is almost universal to quantum systems, and there is a large body of important work that deals with quantum parameter estimation in presence of noise. Somewhat less explored is the area of estimation of the noise itself, and this is what we wished to address here.

Since entanglement is a costly resource, we are led to the following question: Is entanglement necessary to achieve optimal precision in such noise-parameter estimation tasks? More precisely, does the optimal probe that maximizes the quantum Fisher information necessarily require to be entangled?
In unitary-encoding phase estimation, entangled probes are known to be essential to  surpass the standard quantum limit. In contrast, for noise-parameter estimation of quantum channels, the role of entanglement remains relatively underexplored. In this work, our aim was to address this gap by  studying whether, and if yes, how much entanglement is required to attain the best metrological precision in noise-parameter estimation.

We have studied the role of entanglement in achieving optimal precision for estimating the noise parameter of a large class of quantum channels, which we referred to as vector encoding. We began by deriving a necessary criterion for identifying optimal input probe states for estimating the noise parameter in these channels.
We then investigated the entanglement characteristics of optimal probes for two representative subclasses: arbitrary-dimensional local depolarizing channels and two-qubit local bit-flip channels.

For two-qudit local depolarizing channels, we observed that maximally entangled two-qudit probes are not always optimal. Instead, as the depolarizing noise strength decreases, the entanglement of the optimal probe increases in a stepwise manner - one Schmidt rank per step, culminating in an abrupt transition - of several Schmidt ranks - to a maximally entangled state at a critical noise strength, which depends on the local Hilbert space dimension. Moreover, for any arbitrary-dimensional two-party local depolarizing channel, there always exists a range of noise strength, dependent on the local Hilbert space dimension, within which any pure product probe serves as the optimal choice for the estimation.
Furthermore, extending our analysis to multipartite probes with an arbitrary number of subsystems, we have shown that in the regime of extreme depolarization of the qudit depolarizing channel, any product input state achieves the lowest estimation error.
Also, in the noise-parameter estimation of a multi-qubit local depolarizing channel, we observe that for a certain fixed range of the noise parameter - independent of the number of parties in the probe - the optimal precision is achieved using fully product states.

For two-qubit local bit-flip channels, we have found that entanglement is not a necessary resource: a specific product state suffices to achieve the optimal precision in estimating the noise parameter. However, unlike the case of local depolarizing channels, entangled states can also provide the same optimal precision.

\section*{Acknowledgment}
We acknowledge the use of Armadillo and Nlopt for quantum information and computation. 
PG acknowledges support from the ``INFOSYS scholarship for senior students'' at Harish-Chandra Research Institute, India.

\twocolumngrid
\bibliography{met_one}
\onecolumngrid
\section*{Appendix}
\appendix
In Appendix~\ref{ax:form_of_optimum_state}, we show the process of finding the form of optimal two-qudit input states, using the necessary criterion of optimal states for vector encoding discussed in Sec.~\ref{sec3}. Next, considering the form of this two-qudit optimal probe, we derive the expression of the corresponding QFI in Appendix~\ref{ax:form_of_Fisher_information} .

\section{Optimal two-qudit probes for local depolarizing channels}
\label{ax:form_of_optimum_state}
Due to Schmidt decomposition, any bipartite pure state $\rho_0 :(\mathbb{C}^{d})^{\otimes 2} \to (\mathbb{C}^{d})^{\otimes 2}$ can be written as $\rho_0 = \ket{\psi_0} \bra{\psi_0}$, with $\ket{\psi_0}=\sum _{i=1}^{d}\sqrt{c_{i}}\ket{ii}$ and $\sum_{i=1}^d c_i = 1$. Here $\{\ket{i}\}$, for $i=\{1,2, \ldots d\}$, denotes any local orthonormal basis in the Hilbert space $\mathbb{C}^{d}$.
The action of two-qudit local depolarizing channel, $\Lambda_p$, on the state $\rho_0$ is given by
\begin{align}\label{eq: depol_encoded}
    \rho_{p} &\coloneqq \Lambda_p^{\otimes 2}(\rho_{0}) \notag \\
    &=\left(1-\frac{4p}{3}\right)^2 \rho_{0}+ \left(\frac{4p}{3}\right)^{2}\left(\frac{\mathbb{I}_d}{d}\otimes\frac{\mathbb{I}_d}{d}\right) + \left(\frac{4p}{3}\right)\left(1-\frac{4p}{3}\right) 
    \left(\frac{\mathbb{I}_d}{d}\otimes \rho_1
    + \rho_2 \otimes \frac{\mathbb{I}_d}{d}\right)  \\
     &= \left(1-\frac{4p}{3}\right)^2 \sum_{i,j=1}^{d}\sqrt{c_{i}c_{j}}\ket{ii}\bra{jj} + \left(\frac{4p}{3}\right)^{2}\frac{1}{d^{2}}\sum_{i,j=1}^{d}\ket{ij}\bra{ij} + \left(\frac{4p}{3}\right)\left(1-\frac{4p}{3}\right) \frac{1}{d}\sum_{i,j=1}^{d}(c_{i}+c_{j})\ket{ij}\bra{ij} \notag.
\end{align}
According to Theorem~\ref{theorem-necessary-alpha-cp}, we must have $\left[\rho_0, \rho_p\right] = 0$. This implies that
\begin{align}
\label{eq: com1}
&\left[\sum_{i,j=1}^{d}\sqrt{c_{i}c_{j}}\ket{ii}\bra{jj},\sum_{i,j=1}^{d}(c_{i}+c_{j})\ket{ij}\bra{ij}\right]=0 \notag \\
&\implies \sum_{i,j=1}^{d}\sqrt{c_{i}c_{j}}(c_{i}-c_{j})\ket{ii}\bra{jj}=0.
\end{align}
unless $p$ is exactly $\frac{3}{4}$, and for $p=3/4$ the commutation relation is automatically satisfied.
The Eq.~\eqref{eq: com1} implies that $c_{i}=0$ for some $i$  and all other $c_{i}$ are equal. Thus, we can write the form of the optimal input probe as
\begin{equation}\label{eq: optimal_input_depol}
    \ket{\psi _m}=\frac{1}{\sqrt{m}}\sum _{i=1}^{m\leq d}\ket{ii}.
\end{equation}

\section{Optimal quantum Fisher information
for local two-qudit depolarizing channels
}
\label{ax:form_of_Fisher_information}
We now substitute the input state of the form given in Eq.~\eqref{eq: optimal_input_depol} into the expression for the encoded state in Eq.~\eqref{eq: depol_encoded}, resulting in the encoded state denoted by \( \rho_p(m,d) \). It is important to note that \( \rho_p(m,d) \) automatically satisfies the commutation relation $[\rho_p, \frac{d\rho_p}{dp}] = 0$,
which ensures that both operators are diagonalizable in the same basis. Therefore, the QFI in this case is given by
$\sum_i \frac{1}{p_i} \left( \frac{dp_i}{dp} \right)^2$,
where \( \{p_i\} \) are the eigenvalues of \( \rho_p(m,d) \).
The complete set of \( d^2 \) eigenvalues and their corresponding degeneracies are listed in Table~\ref{table: eigenvalues}. 
Using these eigenvalues, we can compute the QFI as follows:
\begin{table}
\begin{tabular}{|c|c|}
\hline
Eigenvalues & Degeneracy \\
\hline
$\left(\frac{4p}{3d}\right)^{2}+\frac{8p}{3dm}\left(1-\frac{4p}{3}\right)+\left(1-\frac{4p}{3}\right)^{2}$ & 1 \\
\hline
$\left(\frac{4p}{3d}\right)^{2}+\frac{8p}{3dm}\left(1-\frac{4p}{3}\right)$ & $(m^{2}-1)$ \\
\hline
 $\left(\frac{4p}{3d}\right)^{2}+\frac{4p}{3dm}\left(1-\frac{4p}{3}\right)$ & $2m(d-m)$ \\
\hline
$\left(\frac{4p}{3d}\right)^{2}$ & $(d-m)^{2}$ \\
\hline
\end{tabular}
\caption{\textbf{The list of eigenvalues and corresponding degeneracies of the encoded state $\rho_p(m,d)$}. The entries in the first column of the table represent the eigenvalues of the state, while the second column indicates the degeneracies corresponding to those eigenvalues.}
\label{table: eigenvalues}
\end{table}

\begin{align*}
    \mathbb{F}_{\text{Q}}&=
    \frac{\left[\ \frac{32p}{9d^{2}}+\frac{8}{3dm}\left(1-\frac{8p}{3}\right)-\frac{8}{3}\left(1-\frac{4p}{3}\right) \right]^{2}}{\frac{16p^{2}}{9d^{2}}+\frac{8p}{3dm}\left(1-\frac{4p}{3}\right)+\left(1-\frac{4p}{3}\right)^{2}} \notag \quad +(m^{2}-1)\frac{\left[\ \frac{32p}{9d^{2}}+\frac{8}{3dm}\left(1-\frac{8p}{3}\right)\right]^{2}}{\frac{16p^{2}}{9d^{2}}+\frac{8p}{3dm}\left(1-\frac{4p}{3}\right)} \notag \quad +2m(d-m)\frac{\left[\ \frac{32p}{9d^{2}}+\frac{4}{3dm}\left(1-\frac{8p}{3}\right)\right]^{2}}{\frac{16p^{2}}{9d^{2}}+\frac{4p}{3dm}\left(1-\frac{4p}{3}\right)} \notag \\
    &+ (d-m)^{2}\frac{64}{9d^2}.
\end{align*}

\end{document}